\theoremstyle{plain}
\newtheorem{theorem}{Theorem}
\newtheorem{corollary}[theorem]{Corollary}
\newtheorem{proposition}[theorem]{Proposition}
\theoremstyle{definition}
\newtheorem{definition}[theorem]{Definition}
\newtheorem{example}[theorem]{Example}
\title[Infinite Weighted Automata and Graphs]{Applications in Enumerative Combinatorics of Infinite Weighted Automata and Graphs}
\author{Rodrigo De Castro}
\address{\noindent Departamento de Matemáticas,  Universidad Nacional de Colombia, Bogotá, COLOMBIA}
\email{rdecastrok@unal.edu.co}
\author{ Andrés L. Ramírez}
\address{\noindent Departamento de Matemáticas,  Universidad Nacional de Colombia, Bogotá, COLOMBIA}
\email{andres1729@yahoo.com}
\author{ Jos\'e L. Ram\'{\i}rez}
\address{\noindent Instituto de Matem\'aticas y sus Aplicaciones, Universidad Sergio Arboleda, Bogot\'a,  COLOMBIA, \and , Departamento de Matemáticas,  Universidad Nacional de Colombia, Bogotá, COLOMBIA }
\email{josel.ramirez@ima.usergioarboleda.edu.co}
\urladdr{http://sites.google.com/site/ramirezrjl}
\date{\today}
\subjclass[2010]{Primary  68Q45, 05A15 ; Secondary 05A19}
\keywords{Infinite Weighted Automata, Enumerative Combinatorics, Continued Fractions,  Generating Functions, Lattice Paths}
\begin{document}
\newpsobject{grilla}{psgrid}{subgriddiv=1, griddots=10, gridlabels=6pt}

\begin{abstract}
In this paper  we present a general methodology to solve a wide variety of  classical lattice path counting problems in an uniform way. These counting problems are related to Dyck paths, Motzkin paths and some generalizations. The methodology uses weighted automata, equations of ordinary generating functions and continued fractions. It is a variation of the technique  proposed by J. Rutten, which is called Coinductive Counting.
\end{abstract}
\maketitle

\section{Introduction}

Formal languages, finite automata and grammars are basic mathematical objects in Theoretical Computer Science with remarkable applications in various fields such as  Algebra, Number Theory and Combinatorics, \cite{CAN, Mansour, DEL}. In particular, by using  finite automata and grammars,  some combinatorial results are related to enumeration of  discrete objects and their generating functions  \cite{BOU, DEL, FLA2, mansour3, mansour2}.

Combinatorics is an important branch of Mathematics which is focused on study of discrete objects. These kind of objects often arise in Theoretical Computer Science. Enumerate Combinatorics is one of the main subfield of Combinatorics and it addresses the problem how to count the number of elements of a finite set in an exact or approximate way. The finite set is given by some combinatorial conditions. Some examples of combinatorial objects are lattice paths, trees, polyominoes, words and planar maps, etc.

Several different methods exist to study combinatorial objects. For example,  the symbolic enumeration method \cite{FLA2}, the transfer-matrix method  \cite{GRAH}, the Schützenberger  methodology also called \linebreak Delest-Viennot-Schützenberger methodology \cite{BOU} and coinductive counting \cite{RUT}. The last  methodology uses infinite weighted automata, stream bisimulation and stream calculus.

An infinite weighted automaton  is a generalization of a nondeterministic weighted finite automaton. Its transitions carry weights. These weights may model, e.g., the cost involved when executing a transition, or the probability or reliability of its successful execution \cite{WEG}. Some restricted relations between infinite weighted automata and combinatorics have been developed in  \cite{ALA, RUT, UCHI, UCHI2}; however, there are few studies in this direction.

In this paper we present a variation of the methodology developed by J. Rutten in \cite{RUT}, without the use of coinduction.  We use only infinite weighted automata, weighted graphs and continued fractions. We develop a set of properties which allow us  to find the associated ordinary generating function by solving  systems of equations. It is possible to derive existing and new counting results  from this new perspective.  Specifically, we apply this methodology in problems of lattice paths, such as Dyck paths, Riordan paths,  Motzkin paths,  colored Motzkin paths, generalized Motzkin paths, among others.

The outline of this paper is as follows. In Section 2 we recall the notions of weighted automata and its ordinary generating functions. In Section 3 we develop a set of properties to decide when an automaton is convergent, which are necessary to find the ordinary generating function associated with the automaton. Then in the Section 4 we describe the proposed Counting Automata Methodology. Finally, in Section 5 we find the generating function of a family of infinite weighted automata and we work out  some examples of lattice paths.

\section{Weighted Automata and Generating Functions}
The terminology and notations are mainly those of Sakarovitch \cite{JAC} and Shallit \cite{SHA}. Let $\Sigma$ be a finite alphabet, whose elements are called \emph{symbols}. A \emph{word} over  $\Sigma$  is a finite sequence of symbols from $\Sigma$. The set of all words over $\Sigma$, i.e., the free monoid generated by $\Sigma$, is denoted by $\Sigma^*$. The identity element $\epsilon$ of $\Sigma^*$ is called the \emph{empty word}. For any word $w\in\Sigma^*$,  $\left|w\right|$  denotes its \emph{length}, i.e., the number of symbols occurring in $w$. The length of $\epsilon$ is taken to be equal to 0. If $a\in\Sigma$ and $w\in \Sigma^{*}$, then $\left|w\right|_{a}$ denotes the number of occurrences of $a$ in $w$.  Let $\Sigma$ be a finite alphabet. Then each subset of $\Sigma^*$  is called a \emph{formal language} over $\Sigma$. The number of words of length $n$ in a language  $L$ is denoted by $L^{(n)}$.\\

An  \emph{automaton} $\mathcal{M}$ is a 5-tuple $\mathcal{M}=\left(\Sigma, Q, q_{0}, F, E\right)$, where $\Sigma$  is a nonempty input alphabet, $Q$ is a nonempty set of states of $\mathcal{M}$, $q_{0}\in Q$ is the initial state of $\mathcal{M}$,  $\emptyset \neq F\subseteq Q$ is the set of final states of $\mathcal{M}$ and $E \subseteq Q \times \Sigma \times Q$ is the set of transitions of $\mathcal{M}$.  The language recognized  by an automaton   $\mathcal{M}$ is denoted by  $L(\mathcal{M})$. If $Q, \Sigma$ and $E$ are finite sets, we say that  $\mathcal{M}$ is a finite automaton \cite{JAC}.

We often describe an automaton $\mathcal{M}$ by providing a transition diagram or a labelled graph. This is a directed graph where states are represented by circles, final states by double circles, the initial state is labelled by a headless arrow entering a state, and transitions are represented by directed arrows, labelled with a symbol.  Automata in this paper do not have useless states.

\begin{example}\label{eje1}
Consider the finite automaton $\mathcal{M}=\left(\Sigma, Q, q_{0}, F, E\right)$ where $\Sigma=\left\{a, b\right\}$, $Q=\left\{q_{0}, q_{1}\right\}$, $F=\left\{q_{0}\right\}$ and $E=\{(q_0,a,q_1), (q_0,b,q_0), (q_1,a,q_0) \}$.
The transition diagram of $\mathcal{M}$ is as shown in Figure \ref{ejediagrama}. It is easy to verify that  $L(\mathcal{M})=(b \cup aa)^*$.
\begin{figure}[h]
  \begin{center}
    \unitlength=3pt
    \begin{picture}(20,10)(0,1)
    \gasset{Nw=4,Nh=4,Nmr=2,curvedepth=0}
    \thinlines
    \node[Nmarks=ir,iangle=180](A0)(0,0){\tiny{$q_{0}$}}
    \node(A1)(20,0){\tiny{$q_{1}$}}
    \drawloop[loopangle=90](A0){$b$}
    \drawedge[curvedepth=2.5](A0,A1){$a$}
    \drawedge[curvedepth=2.5](A1,A0){$a$}
    \end{picture}
  \end{center}
  \caption{Transition diagram of $\mathcal{M}$, Example  \ref{eje1}.} \label{ejediagrama}
\end{figure}
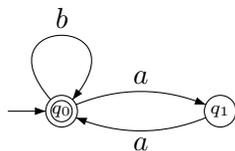

 \end{example}

\begin{example}\label{eje2}
Consider the infinite automaton $\mathcal{M_D}=\left(\Sigma, Q, q_{0}, F, E\right)$, where $\Sigma=\left\{a, b\right\}$, $Q=\left\{q_{0}, q_{1}, \ldots\right\}$, $F=\left\{q_{0}\right\}$ and $E=\left\{(q_i, a, q_{i+1}), (q_{i+1}, b, q_{i}): i\in \mathbb{N}\right\}$.
The transition diagram of $\mathcal{M_D}$ is as shown in Figure \ref{ejediagrama2}.
\begin{figure}[h]
  \centering
    \unitlength=3pt
    \begin{picture}(52, 5)(0,-1)
        \gasset{Nw=4,Nh=4,Nmr=2,curvedepth=0}
    \thinlines
    \node[Nmarks=ir,iangle=180, curvedepth=3](A0)(0,1){\tiny{$q_{0}$}}
    \node(A1)(15,1){\tiny{$q_{1}$}}
    \node(A2)(30,1){\tiny{$q_{2}$}}
    \node(A3)(45,1){\tiny{$q_{3}$}}
    \drawedge[curvedepth=2.5](A0,A1){$a$}
    \drawedge[curvedepth=2.5](A1,A0){$b$}
    \drawedge[curvedepth=2.5](A1,A2){$a$}
    \drawedge[curvedepth=2.5](A2,A1){$b$}
    \drawedge[curvedepth=2.5](A2,A3){$a$}
    \drawedge[curvedepth=2.5](A3,A2){$b$}
    \gasset{Nframe=n,Nadjust=w,Nh=6,Nmr=0}
    \node(P)(52,1){$\cdots$}
    \end{picture}
  \caption{Transition  Diagram of  $\mathcal{M_D}$, Example \ref{eje2}.}
  \label{ejediagrama2}
\end{figure}
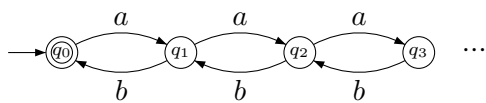

The language accepted by $\mathcal{M_D}$ is: $$L(\mathcal{M_D})=\left\{w\in\Sigma^*: |w|_a=|w|_b \ \text{and for all prefix $v$ of} \ w, |v|_b\leq|v|_a \right\}.$$  This  automaton is known as the  Dyck Automaton \cite{JEAN}.
\end{example}

\subsection{Generating Function of Languages}
An ordinary generating function  $F=\sum_{n=0}^{\infty}f_nz^n$ corresponds to a formal language $L$ if $f_n=\left|\left\{w\in L: \left|w\right|=n\right\}\right|$, i.e., if the $n$-th coefficient $f_n$ gives the number of words in $L$ with length $n$.\\ How to find the ordinary generating function (GF) corresponding to context-free language is known as the \emph{Schützenberger methodology}, (see, e.g.,   \cite{BOU,DEL,FLA2}). If $L \subseteq \Sigma^*$ is an unambiguous context-free language, then the GF corresponding to $L$ is algebraic over $\mathbb{Q}(x)$; moreover, if $L$ is a regular language then the GF corresponding to $L$  is a rational series (see, e.g.,  \cite{WEG, FLA2, JAC}).

Let $G=\left(V, \Sigma, P, S\right)$ be an unambiguous context-free grammar of the language $L_G$, where $V$ is a finite set of \emph{nonterminal symbols}, $\Sigma$ is a finite set of \emph{terminal symbols} with $\Sigma \cap V =\emptyset$,  $S\in V$ is a special symbol in $V$, called the  \emph{starting symbol} and $P$ is a finite set of \emph{rules} which,  $P\subseteq V \times \left(V \cup \Sigma\right)^*$.  The
morphism $\Theta$ is defined by
\begin{align*}
\Theta(\epsilon)&=1, \\
\Theta(a)&=z, \ \forall a \in \Sigma, \\
\Theta(A)&=A(z), \ \forall A\in V.
\end{align*}
Any production rule $A\rightarrow e_1|e_2|\cdots | e_k \in P$ yields an algebraic equation in the  $A(z), B(z), \ldots$
 \begin{align*}
 \Theta(A)=\sum_{i=1}^{\infty}\Theta(e_i).
 \end{align*}
We obtain a system of equations over the unknown  $A(z), B(z), \ldots$. This system has to be solved for $S(z)$ and gives the generating function corresponding to $L_G$.

\begin{example}
Consider the finite automaton from  Example  \ref{eje1}. Then we obtain the following system of equations
\begin{align*} \begin{cases}
\mathcal{L}_0&=\left\{b\right\} \times \mathcal{L}_0 + \left\{a\right\} \times \mathcal{L}_1 + 1\\
\mathcal{L}_1&=\left\{a\right\} \times \mathcal{L}_0
\end{cases}
\end{align*}
This gives rise to a set of equations for the associated GFs
\begin{align*} \begin{cases}
L_0(z)&=zL_0(z) + zL_1(z) + 1\\
L_1(z)&=zL_0(z)
\end{cases}
\end{align*}
Solving the system, we have the GF corresponding to $L(\mathcal{M})$.  It is
$L_0(z)$ since the initial state of the automaton is $q_0$, and
\begin{align*}
L_0(z)=\frac{1}{1-z-z^2}=\sum_{n=0}^{\infty}F_nz^n,
\end{align*}
where $F_n$ is the $n$-th Fibonacci number, see sequence A000045 \footnote{Many integer sequences and their properties are to be found electronically on the On-Line Encyclopedia of Sequences \cite{OEIS}.}.
\end{example}

\subsection{Formal Power Series and Weighted Automata}
Given an alphabet $\Sigma$ and a semiring $\mathbb{K}$. A \emph{formal power series} or \emph{formal series} $S$ is a function $S:\Sigma^*\rightarrow \mathbb{K}$. The image of a word $w$ under $S$ is called the \emph{coefficient} of $w$ in $S$ and is denoted by $s_w$. The series $S$ is written as a formal sum $S = \sum_{w\in \Sigma^*} s_ww$. The set of formal power series over $\Sigma$ with coefficients in $\mathbb{K}$ is denoted by $\mathbb{K}\left\langle\langle \Sigma^* \right\rangle\rangle$.

An automaton over $\Sigma^*$  with weight in $\mathbb{K}$, or \emph{$\mathbb{K}$-automaton} over $\Sigma^*$ is a graph labelled with elements of $\mathbb{K}\left\langle\langle \Sigma^* \right\rangle\rangle$, associated with two maps from the set of vertices to $\mathbb{K}\left\langle\langle \Sigma^* \right\rangle\rangle$. Specifically, a \emph{weighted automaton} $\mathcal{M}$ over $\Sigma^*$ with weights in $\mathbb{K}$ is 4-tuple  $\mathcal{M}=\left(Q, I, E, F\right)$ where
\begin{itemize}
  \item  $Q$ is a nonempty set of \emph{states} of $\mathcal{M}$.
  \item An element $E$ of  $\mathbb{K}\left\langle\langle \Sigma^* \right\rangle\rangle^{Q\times Q}$ called \emph{transition matrix}.
  \item $I$ is an element of $\mathbb{K}\left\langle\langle \Sigma^* \right\rangle\rangle^{Q}$, i.e., $I$ is a function from $Q$ to $\mathbb{K}\left\langle\langle \Sigma^* \right\rangle\rangle$. $I$  is the \emph{initial function} of  $\mathcal{M}$ and can also be seen as a row vector of dimension $Q$, called \emph{initial vector} of  $\mathcal{M}$.
    \item $F$ is an element of $\mathbb{K}\left\langle\langle \Sigma^* \right\rangle\rangle^{Q}$, i.e., $F$ is a function from $Q$ to $\mathbb{K}\left\langle\langle \Sigma^* \right\rangle\rangle$. $F$  is the \emph{final function} of $\mathcal{M}$ and  can also be seen as a column vector of dimension $Q$, called \emph{final vector} of  $\mathcal{M}$.
\end{itemize}
For more details see \cite{WEG, JAC}.

We say that $\mathcal{M}$ is a \emph{counting automaton} if $\mathbb{K}=\mathbb{Z}$ and $\Sigma^*=\left\{z\right\}^*$. With each automaton, we can associate a counting automaton. It can be obtained from a given automaton replacing every transition labelled with a symbol $a$, $a\in\Sigma$, by a transition labelled with $z$. This transition is called a \emph{counting transition} and the  graph is called  a \emph{counting automaton} of  $\mathcal{M}$.
\begin{figure}[h]
  \begin{center}
    \unitlength=3pt
    \begin{picture}(50, 0)(0,2)
    \gasset{Nw=5,Nh=5,Nmr=2.5,curvedepth=0}
    \thinlines
    \node[Nmarks=i,iangle=180](A0)(0,0){$p$}
    \node(A1)(20,0){$q$}
    \drawedge(A0,A1){$a$}
    \node[Nmarks=i,iangle=180](A00)(40,0){$p$}
    \node(A11)(60,0){$q$}
    \drawedge(A00,A11){$z$}
    \end{picture}
  \end{center}
  \caption{Counting transition.}\label{autconteo}
\end{figure}
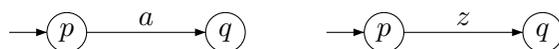

Each transition from $p$ to $q$ yields an equation:
\begin{align*}
L(p)(z)=zL(q)(z) + \left[p\in F\right] + \cdots.
\end{align*}
We use  $L_p$ to denote  $L(p)(z)$. We also use Iverson's notation, $\left[P\right] =1$ if the proposition $P$ is true and  $\left[P\right] =0$ if $P$ is false.

\section{Convergent Automata and Convergent Theorems}

We denote by $L^{(n)}(\mathcal{M})$ the number of words of length $n$ recognized by the automaton $\mathcal{M}$, including repetitions.
\begin{definition}
We say that an automaton $\mathcal{M}$ is convergent if for all integer $n\geqslant 0$, $L^{(n)}(\mathcal{M})$ is finite.
\end{definition}

It is clear that every finite automaton is convergent, however, there are non convergent infinite automata.

\begin{example}\label{eje4}
Let  $\mathcal{M}=\left(\Sigma, Q, q_{0}, F, E\right)$ be an infinite automaton, where  $\Sigma=\left\{a\right\}$,  $Q=\left\{q_{0}, q_{1},  \ldots\right\}=F$ and $E=\{(q_i,a,q_{i+1}): i\in \mathbb{N}\}$, see Figure \ref{aut2}. It is clear that  $L(\mathcal{M})=\left\{a^n: n\geq 0\right\}$, then  $L^{(n)}(\mathcal{M})=1$ for all  $n\geqslant 1$, hence  $\mathcal{M}$ is convergent.
\begin{figure}[h]
\centering
    \unitlength=3pt
    \begin{picture}(35, 3)(0,0)
    \gasset{Nw=4,Nh=4,Nmr=2,curvedepth=0}
    \thinlines
    \node[Nmarks=ir,iangle=180](A0)(0,0){\tiny{$q_{0}$}}
    \node[Nmarks=r](A1)(15,0){\tiny{$q_{1}$}}
    \node[Nmarks=r](A2)(30,0){\tiny{$q_{2}$}}
    \node[Nmarks=r](A3)(45,0){\tiny{$q_{3}$}}
    \drawedge(A0,A1){$a$}
    \drawedge(A1,A2){$a$}
    \drawedge(A2,A3){$a$}
    \gasset{Nframe=n,Nadjust=w,Nh=6,Nmr=0}
    \node(P)(51,0){$\cdots$}
    \end{picture}
  \caption{Transition diagram of $\mathcal{M}$,  Example \ref{eje4}.} \label{aut2}
\end{figure}
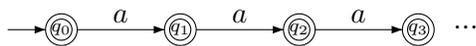
\end{example}

\begin{example}\label{eje5}
Let  $\mathcal{M}=\left(\Sigma, Q, q_{0}, F, E\right)$ be an infinite automaton, where  $\Sigma=\left\{a_{0}, a_1, \ldots\right\}$, $Q=\left\{q_{0}, q\right\}$, $F=\left\{q\right\}$ and $E=\{(q_0,a_i,q): i\in \mathbb{N}\}$, see Figure \ref{aut3}. It is clear  that $L(\mathcal{M})=\Sigma$, which is an infinite set, then $\mathcal{M}$ is not convergent.
\begin{figure}[h]
  \begin{center}
    \unitlength=3pt
    \begin{picture}(20, 5)(0,3)
    \gasset{Nw=4,Nh=4,Nmr=2,curvedepth=0}
    \thinlines
    \node[Nmarks=i,iangle=180](A0)(0,0){\tiny{$q_{0}$}}
    \node[Nmarks=r](A1)(25,0){\tiny{$q$}}
    \drawedge[curvedepth=8](A0,A1){$a_{0}$}
    \drawedge[curvedepth=4](A0,A1){$a_{1}$}
    \drawedge(A0,A1){$a_{2}$}
    \gasset{Nframe=n,Nadjust=w,Nh=6,Nmr=0}
    \node(P)(12.5,-2){$\vdots$}
    \end{picture}
  \end{center}
  \caption{Transition diagram of  $\mathcal{M}$, Example \ref{eje5}.} \label{aut3}
\end{figure}
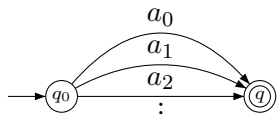
\end{example}

\begin{example}\label{eje55}
Let $\mathcal{M}=\left(\Sigma, Q, q_{0}, F, E\right)$ be an infinite automaton, where $\Sigma=\left\{a\right\}$, $Q=\left\{q_{0}, q_1,q_2,\cdots \right\}$, $F=\left\{q_1, q_2, \cdots\right\}$ and $E=\{(q_0,a,q_{i}): i\in \mathbb{Z}^+\}$, see Figure \ref{aut33}. It is clear that  $L(\mathcal{M})=\left\{a\right\}$, however, $L^{(1)}(\mathcal{M})$ is infinite.
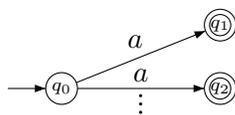
\begin{figure}[h]
  \begin{center}
    \unitlength=3pt
    \begin{picture}(20, 10)(0,3)
    \gasset{Nw=4,Nh=4,Nmr=2,curvedepth=0}
    \thinlines
    \node[Nmarks=i,iangle=180](A0)(0,0){\tiny{$q_{0}$}}
    \node[Nmarks=r](A1)(20,8){\tiny{$q_1$}}
        \node[Nmarks=r](A11)(20,0){\tiny{$q_2$}}
    \drawedge[curvedepth=0](A0,A1){$a$}
    \drawedge[curvedepth=0](A0,A11){$a$}
    \gasset{Nframe=n,Nadjust=w,Nh=6,Nmr=0}
    \node(P)(10,-2){$\vdots$}
    \end{picture}
  \end{center}
  \caption{Transition diagram of  $\mathcal{M}$, Example  \ref{eje55}.} \label{aut33}
\end{figure}
\end{example}

\subsection{Criterions for Convergence}
\begin{definition}
Let $\mathcal{M}=\left(\Sigma, Q, q_{0}, F, E\right)$ be an automaton. We defined the set of  states  of  $\mathcal{M}$ reachable   from  state  $q\in Q$ in $n$ transitions,  $Q_n^q$, recursively as follows:
\begin{align*}
Q_n^q=\begin{cases}
\left\{q\right\}, &  \ \text{if} \ n=0;\\
\bigcup\left\{p': (p,a,p')\in E,  p\in Q_{n-1}^q\right\},& \ \text{if} \  n\geqslant 1.
\end{cases}
\end{align*}
\end{definition}
\begin{theorem}[\textbf{First Convergence Theorem}]\label{teorema1conv}
Let $\mathcal{M}$ be an automaton, such that each vertex (state) of the counting automaton of  $\mathcal{M}$ has finite degree. Then $\mathcal{M}$ is convergent.
\end{theorem}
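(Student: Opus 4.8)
The plan is to bound $L^{(n)}(\mathcal{M})$ by the number of runs of length $n$ of $\mathcal{M}$ issued from $q_0$, and then to prove that this number is finite by induction on $n$. Recall that a word $w=a_1a_2\cdots a_n$ is recognized by $\mathcal{M}$ exactly when there is a run (a walk along transitions) $\rho=(q_0,a_1,p_1)(p_1,a_2,p_2)\cdots(p_{n-1},a_n,p_n)$ with $p_n\in F$, and that $L^{(n)}(\mathcal{M})$ counts such accepting runs with multiplicity, i.e.\ $L^{(n)}(\mathcal{M})$ is exactly the number of accepting runs of length $n$. In particular $L^{(n)}(\mathcal{M})\le R_n$, where $R_n$ denotes the number of all runs of length $n$ of $\mathcal{M}$ starting at $q_0$ (with no condition on the final state), so it suffices to prove $R_n<\infty$ for every $n\ge 0$.

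First I would isolate the one place where the hypothesis is used. Since the counting automaton of $\mathcal{M}$ is obtained by relabelling \emph{each} transition of $\mathcal{M}$ by $z$ (parallel, differently labelled transitions being retained as distinct counting transitions), there is a bijection between the transitions of $\mathcal{M}$ and the counting transitions; hence the number of transitions of $\mathcal{M}$ leaving a state $p$ equals the out-degree of $p$ in the counting automaton, which is at most the degree of $p$ there, and therefore finite by assumption. Denote this finite number by $d(p)$.

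Then I would run the induction. For $n=0$ the only run from $q_0$ is the empty one, so $R_0=1$. Assume $R_{n-1}<\infty$ and list the runs of length $n-1$ from $q_0$ as $\rho_1,\dots,\rho_{R_{n-1}}$, where $\rho_i$ ends in a state $r_i\in Q_{n-1}^{q_0}$. Every run of length $n$ from $q_0$ is, in a unique way, the concatenation of some $\rho_i$ with one further transition leaving $r_i$, so
\begin{align*}
R_n=\sum_{i=1}^{R_{n-1}} d(r_i)\;\le\;\sum_{i=1}^{R_{n-1}} (\deg r_i)\;<\;\infty ,
\end{align*}
a finite sum of finite terms. This closes the induction, and together with $L^{(n)}(\mathcal{M})\le R_n$ it proves the theorem.

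The argument is elementary; the point that deserves the most care is the middle step, namely the assertion that the degree in the counting automaton really governs the branching of runs of $\mathcal{M}$ — equivalently, that passing to the counting automaton does not merge distinct parallel transitions, so that multiplicities are not lost in the inequality $L^{(n)}(\mathcal{M})\le R_n$. Example~\ref{eje55} shows this is essential and that it is precisely the out-degrees that matter: there $q_0$ has infinitely many outgoing transitions, so $R_1=\infty$, and indeed $L^{(1)}(\mathcal{M})$ is infinite although $L(\mathcal{M})=\{a\}$ is finite; hence the finite-degree hypothesis cannot be dropped.
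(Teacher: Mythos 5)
Your proof is correct and takes essentially the same route as the paper's: both count length-$n$ paths from $q_0$ using the finite-degree hypothesis to bound the branching at every step (the paper phrases this via the finiteness of $\bigcup_{k=0}^{n} Q_k^{q_0}$ and ``$n$ choices, each with a finite number of options,'' while you run an explicit induction on the number $R_n$ of runs). Your extra care in noting that $L^{(n)}(\mathcal{M})$ counts accepting runs with multiplicity and that the counting automaton keeps parallel transitions distinct --- so its degrees really do control the branching of $\mathcal{M}$ --- is a sharpening of a point the paper leaves implicit, but not a different argument.
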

\begin{proof} Any path of length $n$ in the transition  diagram  of   $\mathcal{M}$ can be considered as a sequence of $n+1$ states, where each state is  taken exclusively of the sets
$Q_0^{q_0}, Q_1^{q_0}, Q_2^{q_0}, \ldots, Q_n^{q_0}$.  Since $\bigcup_{k=0}^{n} Q_k^{q_0}$ is a finite set,   $L^{(n)}(\mathcal{M})$ is finite, because any  word of length $n$ is obtained after $n$ choices, each with a finite number of options.
\end{proof}

\begin{example}
The counting automaton of the automaton  $\mathcal{M_D}$ in Example \ref{eje2} is convergent.
\end{example}
The following definition plays an important role in the development of applications because it allows to simplify counting automata whose transitions are formal series.

Let $\mathcal{M}$ be an automaton, and let $f(z)=\sum_{n=1}^{\infty}f_nz^n $  be a formal power series with  $f_{n} \in \mathbb{N}$ for all $n\geqslant 1$. In a counting automaton of $\mathcal{M}$ the set of counting transitions  from state $p$ to state $q$, without intermediate final states, see Figure \ref{paralelo1}(left), is represented by a graph with a single edge labelled by $f(z)$, see  Figure \ref{paralelo1}(right).

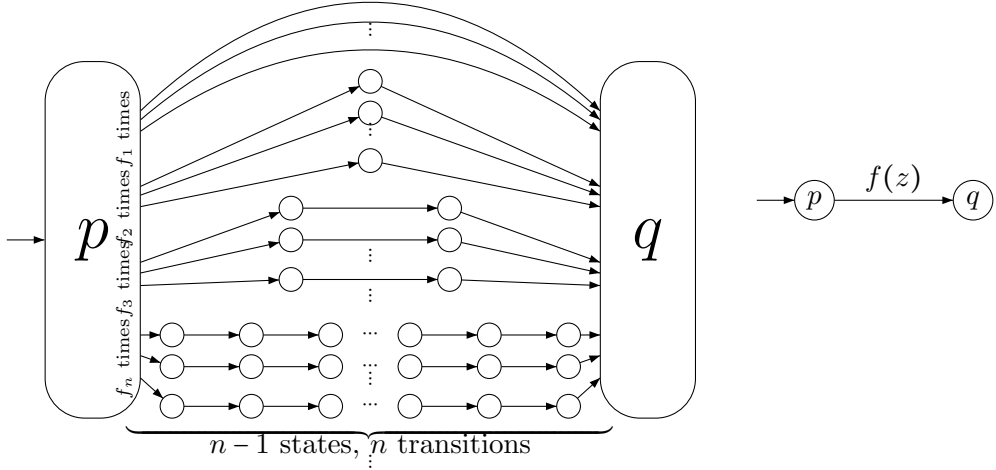
\begin{figure}[h]
\centering
   \unitlength=3pt
    \begin{picture}(106,58)(-5,0)
      \gasset{Nw=5,Nh=5,Nmr=2.5,curvedepth=0}
    \thinlines
    \node[Nmarks=i, iangle=180, Nw=12, Nadjustdist=5, Nh=45, Nmr=5](A0)(-5,30){\Huge{$p$}}
    \node[Nw=12, Nadjustdist=5, Nh=45, Nmr=5](A1)(65,30){\Huge $q$}
    \drawedge[curvedepth=20, syo=10,eyo=10](A0,A1){}
    \drawedge[curvedepth=18, syo=9.5,eyo=9.5](A0,A1){}
    \drawedge[curvedepth=15, syo=9,eyo=9](A0,A1){}
    \node[Nh=3,Nw=3,Nmr=1.5](A2)(30,50){}
     \node[Nh=3,Nw=3,Nmr=1.5](A3)(30,46){}
     \node[Nh=3,Nw=3,Nmr=1.5](A4)(30,40){}
    \drawedge[syo=4](A0,A2){}
    \drawedge[syo=3.5](A0,A3){}
    \drawedge[syo=3](A0,A4){}
    \drawedge[eyo=4](A2,A1){}
    \drawedge[eyo=3.5](A3,A1){}
    \drawedge[eyo=3](A4,A1){}
      \node[Nh=3,Nw=3,Nmr=1.5](A5)(20,34){}
     \node[Nh=3,Nw=3,Nmr=1.5](A55)(40,34){}
      \node[Nh=3,Nw=3,Nmr=1.5](A6)(20,30){}
     \node[Nh=3,Nw=3,Nmr=1.5](A66)(40,30){}
      \node[Nh=3,Nw=3,Nmr=1.5](A7)(20,25){}
     \node[Nh=3,Nw=3,Nmr=1.5](A77)(40,25){}
     \drawedge[syo=-5](A0,A5){}
     \drawedge(A5,A55){}
      \drawedge[eyo=-5](A55,A1){}
    \drawedge[syo=-5.5](A0,A6){}
     \drawedge(A6,A66){}
      \drawedge[eyo=-5.5](A66,A1){}
    \drawedge[syo=-6](A0,A7){}
     \drawedge(A7,A77){}
      \drawedge[eyo=-6](A77,A1){}
       \node[Nh=3,Nw=3,Nmr=1.5](A8)(5,18){}
     \node[Nh=3,Nw=3,Nmr=1.5](A88)(15,18){}
      \node[Nh=3,Nw=3,Nmr=1.5](A888)(25,18){}
     \node[Nh=3,Nw=3,Nmr=1.5](A82)(35,18){}
      \node[Nh=3,Nw=3,Nmr=1.5](A822)(45,18){}
     \node[Nh=3,Nw=3,Nmr=1.5](A8222)(55,18){}
        \node[Nh=3,Nw=3,Nmr=1.5](A9)(5,14){}
     \node[Nh=3,Nw=3,Nmr=1.5](A99)(15,14){}
      \node[Nh=3,Nw=3,Nmr=1.5](A999)(25,14){}
     \node[Nh=3,Nw=3,Nmr=1.5](A92)(35,14){}
      \node[Nh=3,Nw=3,Nmr=1.5](A922)(45,14){}
     \node[Nh=3,Nw=3,Nmr=1.5](A9222)(55,14){}
      \node[Nh=3,Nw=3,Nmr=1.5](A10)(5,9){}
     \node[Nh=3,Nw=3,Nmr=1.5](A110)(15,9){}
      \node[Nh=3,Nw=3,Nmr=1.5](A1110)(25,9){}
     \node[Nh=3,Nw=3,Nmr=1.5](A12)(35,9){}
      \node[Nh=3,Nw=3,Nmr=1.5](A112)(45,9){}
     \node[Nh=3,Nw=3,Nmr=1.5](A1112)(55,9){}
     \drawedge[syo=-12](A0,A8){}
     \drawedge(A8,A88){}
      \drawedge(A88,A888){}
    \drawedge(A82,A822){}
     \drawedge(A822,A8222){}
      \drawedge[eyo=-12](A8222,A1){}
    \drawedge[syo=-12.5](A0,A9){}
     \drawedge(A9,A99){}
      \drawedge(A99,A999){}
    \drawedge(A92,A922){}
     \drawedge(A922,A9222){}
      \drawedge[eyo=-12.5](A9222,A1){}
    \drawedge[syo=-12](A0,A10){}
     \drawedge(A10,A110){}
      \drawedge(A110,A1110){}
    \drawedge(A12,A112){}
     \drawedge(A112,A1112){}
      \drawedge[eyo=-12](A1112,A1){}
      \gasset{Nw=5,Nh=5,Nmr=2.5,curvedepth=0}
    \thinlines
    \node[Nmarks=i,iangle=180](A0)(86,35){$p$}
    \node(A1)(106,35){$q$}
    \drawedge(A0,A1){$f(z)$}

     \gasset{Nframe=n,Nadjust=w,Nh=6,Nmr=0}
    \node(P)(30,56.5){\footnotesize{\tiny{$\vdots$}}}
    \node(P)(30,44){\footnotesize{\tiny{$\vdots$}}}
    \node(P)(30,28){\footnotesize{\tiny{$\vdots$}}}
   \node(P)(30,12.5){\tiny{$\vdots$}}
    \node(P)(30,23){\tiny{$\vdots$}}
   \node(P)(30,18){\tiny{$\cdots$}}
   \node(P)(30,14){\tiny{$\cdots$}}
   \node(P)(30,9){\tiny{$\cdots$}}
   \node(P)(-1,44){\rotatebox{90}{\tiny $f_1$  times}}
   \node(P)(-1,34){\rotatebox{90}{\tiny$f_2$ times}}
   \node(P)(-1,25){\rotatebox{90}{\tiny$f_3$ times}}
   \node(P)(-1,15){\rotatebox{90}{\tiny $f_n$ times}}
   \node(P)(30,6){$\underbrace{ \hspace{6.5cm}  }$}
   \node(P)(30,4){$n-1$ states, $n$ transitions}
   \node(P)(30,2){\tiny{$\vdots$}}
         \end{picture}
      \caption{Transitions  from the state $p$ to $q$ and transition in parallel.}
         \label{paralelo1}
         \end{figure}
This kind of transition is called a \emph{transition in parallel}. The states $p$ and $q$  are called  \emph{visible states} and the intermediate states are called  \emph{hidden states}.

\begin{example}\label{ejeconteoo}
In Figure  \ref{ejeautconteo}(left)  we display a counting  automaton $\mathcal{M}_1$ without  transitions in parallel, i.e., every transition is label by $z$. The transitions from state  $q_1$ to  $q_2$ correspond to the series  $\frac{1-\sqrt{1-4z}}{2}=z+z^2+2z^3+5z^4+14z^5+\cdots$. However, this automaton  can also be represented using transitions in parallel. Figure  \ref{ejeautconteo}(right) displays two examples.
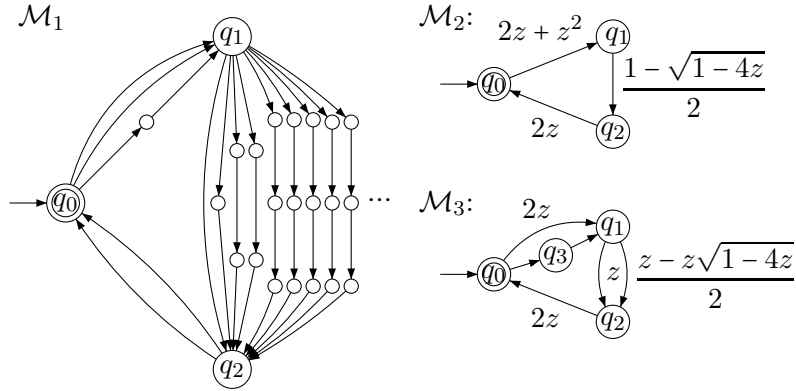
\begin{figure}[h]
\centering
   \unitlength=1.8pt
    \begin{picture}(130,78)(0,0)
      \gasset{Nw=5,Nh=5,Nmr=2.5,curvedepth=0}
    \thinlines
    \node[Nmarks=ri, iangle=180, Nw=8, Nadjustdist=8, Nh=8, Nmr=8](A0)(0,35){$q_0$}
    \node[Nw=8, Nadjustdist=8, Nh=8, Nmr=8](A1)(35,70){$q_1$}
   \node[Nw=8, Nadjustdist=8, Nh=8, Nmr=8](A2)(35,0){$q_2$}
   \node[Nw=3, Nadjustdist=3, Nh=3, Nmr=3](A011)(17,52){}
   \node[Nw=3, Nadjustdist=3, Nh=3, Nmr=3](A121)(32,35){}
   \node[Nw=3, Nadjustdist=3, Nh=3, Nmr=3](A122)(36,23){}
   \node[Nw=3, Nadjustdist=3, Nh=3, Nmr=3](A123)(36,46){}
   \node[Nw=3, Nadjustdist=3, Nh=3, Nmr=3](A1233)(40,23){}
   \node[Nw=3, Nadjustdist=3, Nh=3, Nmr=3](A1234)(40,46){}
   \node[Nw=3, Nadjustdist=3, Nh=3, Nmr=3](A124)(44,17.5){}
   \node[Nw=3, Nadjustdist=3, Nh=3, Nmr=3](A125)(44,35){}
   \node[Nw=3, Nadjustdist=3, Nh=3, Nmr=3](A126)(44,52.5){}
   \node[Nw=3, Nadjustdist=3, Nh=3, Nmr=3](A127)(48,17.5){}
   \node[Nw=3, Nadjustdist=3, Nh=3, Nmr=3](A128)(48,35){}
   \node[Nw=3, Nadjustdist=3, Nh=3, Nmr=3](A129)(48,52.5){}
   \node[Nw=3, Nadjustdist=3, Nh=3, Nmr=3](A1210)(52,17.5){}
   \node[Nw=3, Nadjustdist=3, Nh=3, Nmr=3](A1211)(52,35){}
   \node[Nw=3, Nadjustdist=3, Nh=3, Nmr=3](A1212)(52,52.5){}
   \node[Nw=3, Nadjustdist=3, Nh=3, Nmr=3](A1213)(56,17.5){}
   \node[Nw=3, Nadjustdist=3, Nh=3, Nmr=3](A1214)(56,35){}
   \node[Nw=3, Nadjustdist=3, Nh=3, Nmr=3](A1215)(56,52){}
   \node[Nw=3, Nadjustdist=3, Nh=3, Nmr=3](A1216)(60,17.5){}
      \node[Nw=3, Nadjustdist=3, Nh=3, Nmr=3](A1217)(60,35){}
         \node[Nw=3, Nadjustdist=3, Nh=3, Nmr=3](A1218)(60,52){}
   \drawedge[curvedepth=5](A0,A1){}
   \drawedge[curvedepth=8](A0,A1){}
   \drawedge(A0,A011){}
   \drawedge(A011,A1){}
   \drawedge[curvedepth=-6](A1,A2){}
   \drawedge[curvedepth=-3](A2,A0){}
  \drawedge[curvedepth=3](A2,A0){}
  \drawedge(A1,A121){}
  \drawedge(A121,A2){}
  \drawedge(A1,A1234){}
  \drawedge(A1234,A1233){}
  \drawedge(A1233,A2){}
    \drawedge(A1,A123){}
  \drawedge(A123,A122){}
  \drawedge(A122,A2){}
  \drawedge[curvedepth=1](A1,A1215){}
  \drawedge(A1215,A1214){}
  \drawedge(A1214,A1213){}
  \drawedge[curvedepth=1](A1213,A2){}
  \drawedge[curvedepth=1](A1,A1212){}
  \drawedge(A1212,A1211){}
  \drawedge(A1211,A1210){}
  \drawedge[curvedepth=1](A1210,A2){}
  \drawedge[curvedepth=1](A1,A129){}
  \drawedge(A129,A128){}
  \drawedge(A128,A127){}
  \drawedge[curvedepth=1](A127,A2){}
  \drawedge[curvedepth=1](A1,A126){}
  \drawedge(A126,A125){}
  \drawedge(A125,A124){}
  \drawedge[curvedepth=1](A124,A2){}
    \drawedge[curvedepth=1](A1,A1218){}
  \drawedge(A1218,A1217){}
  \drawedge(A1217,A1216){}
  \drawedge[curvedepth=1](A1216,A2){}
         \gasset{Nw=7,Nh=7,Nmr=3.5,curvedepth=0}
    \thinlines
    \node[Nmarks=ri, iangle=180](A0)(90,60){$q_0$}
    \node(A1)(115,70){ $q_1$}
   \node(A2)(115,50){$q_2$}
   \node[Nmarks=ri, iangle=180](AA0)(90,20){$q_0$}
    \node(AA1)(115,30){$q_1$}
   \node(AA2)(115,10){$q_2$}
   \node(AA3)(103,24){$q_3$}
   \drawedge(A0,A1){$2z+z^2$}
   \drawedge(A1,A2){$\dfrac{1-\sqrt{1-4z}}{2}$}
   \drawedge(A2,A0){$2z$}
      \drawedge[curvedepth=5](AA0,AA1){$2z$}
      \drawedge(AA0,AA3){}
      \drawedge(AA3,AA1){}
          \drawedge(AA2,AA0){$2z$}
     \drawedge[curvedepth=-3](AA1,AA2){$z$}
     \drawedge[curvedepth=3](AA1,AA2){$\dfrac{z-z\sqrt{1-4z}}{2}$}
     \gasset{Nframe=n,Nadjust=w,Nh=6,Nmr=0}
    \node(P)(-5,74){$\mathcal{M}_1$}
   \node(P)(66,35){$\cdots$}
    \node(P)(80,74){$\mathcal{M}_2$:}
        \node(P)(80,35){$\mathcal{M}_3$:}
         \end{picture}
  \caption{Counting automata with transitions in parallel, Example \ref{ejeconteoo}.}
  \label{ejeautconteo}
\end{figure}
\end{example}

This example shows that a counting automaton can have different equivalent representations.
\begin{definition}\label{autoequiv}
Two counting automata $\mathcal{M}_1$ and $\mathcal{M}_2$ are equivalent if for all integer $n\geqslant 0$,  $L^{(n)}(\mathcal{M}_1)=L^{(n)}(\mathcal{M}_2)$. This is denoted by  $\mathcal{M}_1\cong\mathcal{M}_2$.
\end{definition}

\begin{definition}
Let  $f(z)=\sum_{t=0}^{\infty}f_tz^t$ be a power series (or a polynomial). We define  $_nf(z)$ as the polynomial $_nf(z)=\sum_{t=0}^{n}f_tz^t$.
\end{definition}

\begin{theorem}[\textbf{Second Convergence Theorem}]\label{teorema2conv}
Let  $\mathcal{M}$ be an automaton, and  let  $f_1^q(z), f_2^q(z), \ldots, $ be the transitions in parallel from  state $q\in Q$ in a counting automaton of $\mathcal{M}$. Then $\mathcal{M}$ is convergent if the series
\begin{align*}
F^q(z)=\sum_{k=1}^{\infty}f_k^q(z)
\end{align*}
is a convergent series for each visible state $q\in Q$ of the counting automaton.
\end{theorem}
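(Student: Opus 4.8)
The plan is to fix $n\geqslant 0$ and reduce the counting of words of length $n$ recognized by $\mathcal{M}$ to a finite combinatorial problem sitting inside a finite portion of the reduced counting automaton. Write $V\subseteq Q$ for the set of visible states; by construction $q_0\in V$, $F\subseteq V$, every edge of the reduced automaton is a transition in parallel joining two visible states, and this reduced automaton is equivalent to the counting automaton of $\mathcal{M}$ in the sense of Definition~\ref{autoequiv}. For $q\in V$ write $f_k^q(z)=\sum_{t\geqslant 1}(f_k^q)_t z^t$ for the $k$-th transition in parallel leaving $q$, so that $(f_k^q)_t\in\mathbb{N}$ is the number of counting paths of length $t$ from $q$ to the corresponding visible state through hidden states only; the hypothesis that $F^q(z)=\sum_{k\geqslant 1}f_k^q(z)$ is convergent means exactly that $(F^q)_t:=\sum_{k\geqslant 1}(f_k^q)_t$ is a finite natural number for every $t\geqslant 1$. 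The basic observation is that an accepting path of length $n$ in the counting automaton visits a sequence of visible states $q_0=p_0,p_1,\dots,p_m$ with $p_m\in F$, and decomposes as the concatenation of $m$ transitions in parallel of lengths $t_1,\dots,t_m$ with $t_1+\cdots+t_m=n$; since each transition in parallel has length at least $1$, this forces $m\leqslant n$.

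The key step is to show that, for this fixed $n$, only finitely many visible states are relevant. For $\ell\geqslant 0$ let $V_\ell$ be the set of visible states reachable from $q_0$ by a reduced path consisting of at most $\ell$ transitions in parallel of total length at most $n$. I claim each $V_\ell$ is finite, by induction on $\ell$: $V_0=\{q_0\}$, and for the inductive step note that for a fixed $p\in V_\ell$ the inequality $\sum_{t=1}^{n}(F^p)_t<\infty$ forces $(f_k^p)_t\neq 0$ for only finitely many pairs $(k,t)$ with $1\leqslant t\leqslant n$, hence only finitely many visible states are reachable from $p$ by a single transition in parallel of length at most $n$; taking the union of these finite sets over the finite set $V_\ell$ shows $V_{\ell+1}$ is finite. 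In particular $V_n$ is finite, and by the basic observation every accepting path of length $n$ stays inside $V_n$.

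It then remains to count. Inside the finite set $V_n$ there are only finitely many ordered pairs $(p,p')$ of visible states joined by a transition in parallel whose label has a nonzero coefficient in some degree $t\leqslant n$, and each such coefficient $(f^{\,p\to p'})_t$ is a finite natural number; in particular each state of $V_n$ has only finitely many outgoing transitions in parallel carrying a nonzero coefficient in degrees $\leqslant n$. An accepting word of length $n$ is then determined by a choice of: a reduced path $q_0=p_0\to p_1\to\cdots\to p_m$ inside $V_n$ with $m\leqslant n$ and $p_m\in F$ (finitely many); a composition $t_1+\cdots+t_m=n$ with all $t_j\geqslant 1$ (finitely many); and, for each $j$, one of the $(f^{\,p_{j-1}\to p_j})_{t_j}$ available counting paths (finitely many). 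Multiplying these finite counts shows $L^{(n)}(\mathcal{M})$ is finite, and since $n$ was arbitrary, $\mathcal{M}$ is convergent.

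I expect the main obstacle to be exactly the gap between hypothesis and goal: convergence of $F^q(z)$ only says that, for each fixed visible state $q$, the relevant coefficients are finite, with no bound uniform in $q$, and a priori there may be infinitely many visible states, so one cannot just multiply ``finitely many options at each of the $\leqslant n$ steps''. The inductive finiteness of $V_n$ is precisely what closes this gap; everything after it is bookkeeping. Equivalently, one could phrase the argument by replacing every transition in parallel by its degree-$n$ truncation ${}_{n}f(z)$ — which does not change $L^{(n)}(\mathcal{M})$ — observing that the resulting reduced automaton restricted to $V_n$ is finite, expanding it back into an ordinary finite counting automaton with the same value of $L^{(n)}$, and invoking that finite automata are convergent (for instance via Theorem~\ref{teorema1conv}).
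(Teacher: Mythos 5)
Your proof is correct and rests on the same idea as the paper's: the convergence hypothesis makes all degree-$\leq n$ data finite, so only a finite portion of the automaton is relevant to words of length $n$. Indeed, your closing remark (truncate every transition in parallel to ${}_{n}f(z)$, note that $L^{(n)}(\mathcal{M})$ is unchanged, and invoke Theorem \ref{teorema1conv}) is exactly the paper's proof, and your main argument via the inductive finiteness of $V_n$ is just a self-contained unwinding of that same reachability argument, so there is nothing essentially different here.
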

\begin{proof}
Any word of length $n$ is accepted by $\mathcal{M}$  if there is a path from $q_0$ to some final state.  Since the hidden states in a transition in parallel are not final states, then the paths from a visible state to another visible state, corresponding to the terms $f_{n+1}z^{n+1}, f_{n+2}z^{n+2},\ldots$ of a transition in parallel $f(z)$ in a counting automaton of $\mathcal{M}$, are not accepting paths.  Let $\mathcal{M}'$ be an automaton  obtained by replacing all transitions in parallel $f(z)$ in a counting automaton  of $\mathcal{M}$ by the transition $_nf(z)$, then clearly $L^{(n)}(\mathcal{M})=L^{(n)}(\mathcal{M}')$ for all $n\geqslant 0$. On the other hand,  the number of transitions of each visible state $q\in Q'$  in  $\mathcal{M}'$ is finite because
\begin{align*}
\sum_{k=1}^{\infty}\ _nf^q_k(1)=\ _nF^q(1)<\infty,
\end{align*}
 and from each of the hidden state starts a single transition. Hence, by Theorem \ref{teorema1conv}, $\mathcal{M}'$ is convergent. Hence $L^{(n)}(\mathcal{M})=L^{(n)}(\mathcal{M}')$ is finite for all $n\geqslant 0$, therefore  $\mathcal{M}$  is convergent.
\end{proof}

\begin{proposition}\label{sisecuacion}
If $f(z)$ is a polynomial transition in parallel from state $p$  to $q$ in a finite counting automaton  $\mathcal{M}$, then this gives rise to an equation in the  system of GFs equations of $\mathcal{M}$:
\begin{align*}
L_p=f(z)L_q + \left[p\in F\right]+ \cdots
\end{align*}
\end{proposition}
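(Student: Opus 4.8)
The plan is to reduce the claim to the single-transition rule already recorded in Section~2 by \emph{expanding} the transition in parallel back into its defining bundle of ordinary counting transitions. Write $f(z)=\sum_{t=1}^{N}f_tz^t$, which is a finite sum since $f$ is a polynomial, and let $\mathcal{M}'$ be the counting automaton obtained from $\mathcal{M}$ by replacing the edge labelled $f(z)$ from $p$ to $q$ with $f_t$ directed paths of length $t$ from $p$ to $q$, for each $t\leq N$, every such path running through $t-1$ fresh non-final hidden states carrying no other transitions. By the definition of a transition in parallel this is exactly the construction of Figure~\ref{paralelo1} read in reverse, so $\mathcal{M}'\cong\mathcal{M}$; moreover the length-$n$ accepting paths out of the visible state $q$ correspond bijectively (and length-preservingly) in $\mathcal{M}$ and in $\mathcal{M}'$, so $L_q$ denotes the same formal power series in both, and likewise $L_p$. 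Hence it suffices to read off the equation for $L_p$ inside the ordinary finite counting automaton $\mathcal{M}'$.

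Next I would trace the contribution of one hidden path $p=h_0\to h_1\to\cdots\to h_{t-1}\to h_t=q$. Each hidden state $h_j$ with $1\leq j\leq t-1$ is not final and has exactly one outgoing transition, so the rule ``each transition from $p$ to $q$ yields $L(p)(z)=zL(q)(z)+[p\in F]+\cdots$'' specializes at $h_j$ to $L_{h_j}=zL_{h_{j+1}}$ with no Iverson term and no trailing ``$\cdots$''. Composing these identities from $h_{t-1}$ down to $h_1$ gives $L_{h_1}=z^{t-1}L_q$, and the transition $p\to h_1$ then contributes $zL_{h_1}=z^tL_q$ to the equation for $L_p$. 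There are exactly $f_t$ such paths of length $t$, so summing over $t=1,\dots,N$ the entire bundle contributes
\begin{align*}
\sum_{t=1}^{N}f_tz^tL_q=f(z)L_q
\end{align*}
to $L_p$. The general rule also inserts the term $[p\in F]$ into the equation for $L_p$, and any transitions leaving $p$ that do not belong to this bundle account for the remaining ``$\cdots$''; altogether $L_p=f(z)L_q+[p\in F]+\cdots$, as asserted.

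The step requiring the most care is the justification that the hidden states are genuinely ``transparent'': that each has exactly one outgoing transition and is non-final, so that the weight $z^t$ threads through a length-$t$ path without spawning extra Iverson terms, branchings, or convergence obstructions, independently of whether $p=q$ or whether the bundle is entered again later. This is precisely where the hypothesis that $f(z)$ is a \emph{polynomial} enters: it guarantees that the bundle, and hence $\mathcal{M}'$, is a finite counting automaton, so that $\mathcal{M}'$ is convergent (every finite automaton is convergent) and the identities $L_{h_1}=z^{t-1}L_q$ together with the finite sum $\sum_{t}f_tz^tL_q$ are legitimate equalities of formal power series rather than merely formal rearrangements. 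For a non-polynomial transition in parallel one would first pass to the truncations ${}_nf(z)$ exactly as in the proof of Theorem~\ref{teorema2conv}, but that refinement is not needed here.
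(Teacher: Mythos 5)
Your proposal is correct and follows essentially the same route as the paper: expand the polynomial transition in parallel into its bundle of length-$t$ paths through non-final hidden states, propagate $L_{h_j}=zL_{h_{j+1}}$ along each path to get a contribution $f_tz^tL_q$, and sum over $t$ to obtain $L_p=f(z)L_q+\left[p\in F\right]+\cdots$. Your added remarks on why the hidden states contribute no Iverson terms and where polynomiality (finiteness) is used only make explicit what the paper leaves implicit.
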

\begin{proof}
Let  $f(z)=f_1z + f_2z^2 + \cdots + f_nz^n$ be a polynomial transition in parallel, then the set of transitions in parallel  corresponding to the term  $f_kz^k$, $1\leq k \leq n$, can be
represented with a graph as in Figure \ref{tranpolino}.

\begin{figure}[h]
\centering
   \unitlength=2.5pt
    \begin{picture}(60,40)(0,0)
      \gasset{Nw=5,Nh=5,Nmr=2.5,curvedepth=0}
    \thinlines
    \node[Nmarks=i, iangle=180, Nw=10, Nadjustdist=5, Nh=10, Nmr=5](A0)(0,20){\Large{$p$}}
    \node[Nw=10, Nadjustdist=5, Nh=10, Nmr=5](A1)(60,20){\Large $q$}
    \node[Nh=9,Nw=9,Nmr=4.5](A11)(15,35){\scriptsize$q_{11}$}
    \node[Nh=9,Nw=9,Nmr=4.5](A12)(30,35){\scriptsize$q_{12}$}
    \node[Nh=9,Nw=9,Nmr=4.5](A13)(45,35){\scriptsize$q_{1k-1}$}
     \node[Nh=9,Nw=9,Nmr=4.5](A21)(15,22){\scriptsize$q_{21}$}
     \node[Nh=9,Nw=9,Nmr=4.5](A22)(30,22){\scriptsize$q_{22}$}
    \node[Nh=9,Nw=9,Nmr=4.5](A23)(45,22){\scriptsize$q_{2k-1}$}
     \node[Nh=9,Nw=9,Nmr=4.5](A31)(15,5){\tiny$q_{f_k 1}$}
     \node[Nh=9,Nw=9,Nmr=4.5](A32)(30,5){\tiny$q_{f_k 2}$}
     \node[Nh=9,Nw=9,Nmr=4.5](A33)(45,5){\tiny$q_{f_{k}k-1}$}
    \drawedge(A0,A11){$z$}
    \drawedge(A11,A12){$z$}
    \drawedge(A13,A1){$z$}
    \drawedge(A0,A21){$z$}
    \drawedge(A21,A22){$z$}
    \drawedge(A23,A1){$z$}
    \drawedge(A0,A31){$z$}
    \drawedge(A31,A32){$z$}
    \drawedge(A33,A1){$z$}
     \gasset{Nframe=n,Nadjust=w,Nh=6,Nmr=0}
    \node(P)(37.5,35){$\cdots$}
    \node(P)(37.5,22){$\cdots$}
    \node(P)(37.5,5){$\cdots$}
   \node(P)(15,14){$\vdots$}
    \node(P)(30,14){$\vdots$}
    \node(P)(45,14){$\vdots$}
          \end{picture}
          \caption{Transition in parallel corresponding to the term $f_kz^k$.}
          \label{tranpolino}
  \end{figure}
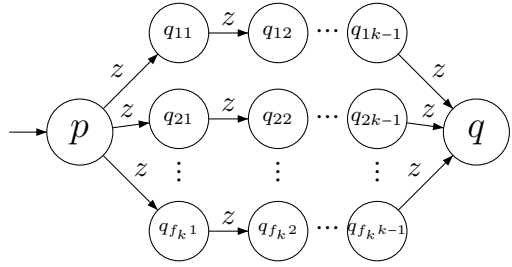

Therefore the GF equation is
\begin{align*}
L_p(z)=zL_{q_{11}}(z) + zL_{q_{21}}(z) + \cdots + zL_{q_{f_k1}}(z) + \left[p\in F\right].
\end{align*}
Since $L_{q_{i1}}(z)=z^{k-1}L_q(z)$, then
\begin{align*}
L_p(z)&=\overbrace{z^kL_{q}(z) + z^kL_{q}(z) + \cdots + z^kL_{q}(z)}^{f_k - \text{times}} + \left[p\in F\right]\\
&=f_kz^kL_q(z) + \left[p\in F\right].
\end{align*}
Considering each of the terms $f_k$, ($1\leq k \leq n$), then
\begin{align*}
L_p(z)&=f_1zL_{q}(z) + f_2z^2L_{q}(z) + \cdots + f_nz^nL_{q}(z) + \left[p\in F\right]\\
&=f(z)L_q(z) + \left[p\in F\right].
\end{align*}
\end{proof}

\begin{proposition}\label{sisecuacion2}
Let $\mathcal{M}$ be a convergent automaton such that  a counting automaton of $\mathcal{M}$ has a finite number  of visible states $q_0,q_1,\ldots,q_r$, in which the number of transitions in parallel  starting from each state is finite.   Let \linebreak
$f_1^{q_t}(z), f_2^{q_t}(z), \ldots, f_{s(t)}^{q_t}(z)$ be the transitions in parallel from the state $q_t\in Q$. Then the  GF for the  language $L(\mathcal{M})$ is $L_{q_0}(z)$. It is obtained  by solving the system of $r+1$ GFs equations
\begin{multline*}
L({q_t})(z)=f_1^{q_t}(z)L(q_{t_1})(z)+f_2^{q_t}(z)L(q_{t_2})(z)+ \cdots
+f_{s(t)}^{q_t}(z)L(q_{t_{s(t)}})(z)+[q_t\in F],
\end{multline*}
with  $0\leq t\leq r$, where $q_{t_k}$ is the visible state joined with $q_t$ through the transition in parallel $f_k^{q_t}$, and  $L(q_{t_k})$ is the GF for the language accepted by
$\mathcal{M}$ if  $q_{t_k}$ is the initial state.
\end{proposition}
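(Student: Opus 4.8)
\textit{Proof proposal.} The plan is to reduce the statement to the case of a \emph{finite} counting automaton whose transitions in parallel are \emph{polynomials} --- a case already covered by Proposition~\ref{sisecuacion} --- and then to pass to the limit one coefficient at a time.

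First I would observe that the system displayed in the statement always has a \emph{unique} solution in $\mathbb{Z}[[z]]^{\,r+1}$. Indeed, every transition in parallel $f_k^{q_t}(z)$ has zero constant term (recall $f(z)=\sum_{m\ge 1}f_mz^m$), so equating the coefficient of $z^m$ on the two sides of the $t$-th equation writes $[z^m]L(q_t)(z)$ as a finite sum of products $\bigl([z^j]f_k^{q_t}(z)\bigr)\bigl([z^{m-j}]L(q_{t_k})(z)\bigr)$ with $1\le j\le m$, each of which involves only a coefficient of some $L(q_{t_k})$ of order $m-j<m$; together with $[z^0]L(q_t)(z)=[q_t\in F]$ this is a triangular recursion that determines every coefficient of every $L(q_t)(z)$ uniquely. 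It therefore suffices to show that this unique solution, evaluated at $q_0$, equals $\sum_{n\ge 0}L^{(n)}(\mathcal{M})z^n$.

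Next, fix $n\ge 0$ and let $\mathcal{M}'$ be the automaton obtained from a counting automaton of $\mathcal{M}$ by replacing each transition in parallel $f_k^{q_t}(z)$ with the polynomial $_nf_k^{q_t}(z)$. Exactly as in the proof of Theorem~\ref{teorema2conv}, this truncation deletes, inside each parallel block, only sub-paths of length $>n$, and since no hidden state is final it alters no accepting path of length $\le n$; hence $L^{(n)}(\mathcal{M}')=L^{(n)}(\mathcal{M})$. Moreover $\mathcal{M}'$ is a finite automaton: there are finitely many visible states, finitely many transitions in parallel leaving each of them, and each such transition is now a polynomial with finitely many hidden states. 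Applying Proposition~\ref{sisecuacion} to the polynomial transitions in parallel leaving each $q_t$ (the ``$\cdots$'' absorbing the remaining ones), the generating functions of $\mathcal{M}'$ satisfy
\begin{align*}
L'(q_t)(z)=\sum_{k=1}^{s(t)}\ {_nf_k^{q_t}}(z)\,L'(q_{t_k})(z)+[q_t\in F],\qquad 0\le t\le r;
\end{align*}
this system is again triangular in the sense above, hence has a unique solution, so its $q_0$-component $L'(q_0)(z)$ must be the generating function of $L(\mathcal{M}')$.

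Finally I would compare the two systems. Since $_nf_k^{q_t}(z)$ and $f_k^{q_t}(z)$ have the same coefficients up to degree $n$, the triangular recursion gives, by induction on $m$, that $[z^m]L'(q_t)(z)=[z^m]L(q_t)(z)$ for every $m\le n$ and every $t$: the $m$-th coefficient of either solution only calls on coefficients of the transitions of order $\le m\le n$, which agree. In particular
\begin{align*}
[z^n]L(q_0)(z)=[z^n]L'(q_0)(z)=L^{(n)}(\mathcal{M}')=L^{(n)}(\mathcal{M}),
\end{align*}
and as $n$ was arbitrary we conclude $L(q_0)(z)=\sum_{n\ge 0}L^{(n)}(\mathcal{M})z^n$, which is the generating function for $L(\mathcal{M})$. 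The step that most needs care is the truncation: one must check that cutting every parallel block at degree $n$ leaves untouched all accepting paths of length $\le n$, and this is precisely the argument borrowed from Theorem~\ref{teorema2conv}, which hinges on hidden states never being final.
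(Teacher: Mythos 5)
Your proposal is correct, and its skeleton is the same as the paper's: truncate every transition in parallel at degree $n$ to get a finite automaton $\mathcal{M}'$, observe (as in Theorem \ref{teorema2conv}, using that hidden states are never final) that $L^{(n)}(\mathcal{M}')=L^{(n)}(\mathcal{M})$, and apply Proposition \ref{sisecuacion} to the truncated system. Where you genuinely differ is in the closing step. The paper solves the finite system symbolically, writes $L'(q_0)(z)=R\bigl({}_nf_k^{q_t}(z)\bigr)$ for a rational expression $R$ in the parallel transitions, and then argues that rational operations (finitely many sums, products, reciprocals) commute with truncation, so that $R\bigl({}_nf_k^{q_t}(z)\bigr)={}_nR\bigl(f_k^{q_t}(z)\bigr)$ and one may pass to the limit $n\to\infty$ to get $L(q_0)=R\bigl(f_k^{q_t}(z)\bigr)$. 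You instead never form $R$: you note that because every transition in parallel has zero constant term (by definition $f(z)=\sum_{m\ge 1}f_mz^m$) and only finitely many leave each visible state, both the original and the truncated systems are triangular in the coefficients, hence each has a unique power-series solution, and the two solutions agree coefficientwise up to degree $n$; letting $n$ vary gives $[z^n]L(q_0)=L^{(n)}(\mathcal{M})$ for all $n$. Your route is the more elementary and arguably the more watertight one, since it sidesteps the implicit claims that the finite system is solvable by rational operations with invertible denominators and that such operations respect truncation; what it gives up is the explicit rational closed form $R$ in the transitions, which the paper's argument produces as a by-product and which is what gets exploited later (e.g., in the continued-fraction computations of Section 5). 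Both arguments, as in the paper, establish the claim only for $L_{q_0}$, which is all the proposition asserts.
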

\begin{proof}
Let $n \geqslant 0$ be an integer, and let  $\mathcal{M}'$ be the automaton obtained by replacing the $s(t)$ transitions in parallel ($f_1^{q_t}(z), f_2^{q_t}(z), \ldots, f_{s(t)}^{q_t}(z)$)  leaving the state $q_t\in Q$ by the transitions $_nf_1^{q_t}(z), \ _nf_2^{q_t}(z),\ldots, \ _nf_{s(t)}^{q_t}(z)$, with $0\leq t \leq r$. Hence $\mathcal{M}'$ is a finite automaton and from Proposition \ref{sisecuacion} the GF of $\mathcal{M}'$ is obtained by  solving the following system for $L'(q_0)$
\begin{multline*}
L'({q_t})(z)= \ _nf_1^{q_t}(z)L'(q_{t_1})(z)+ \ _nf_2^{q_t}(z)L'(q_{t_2})(z)+\cdots
\\ + \ _nf_{s(t)}^{q_t}(z)L'(q_{t_{s(t)}})(z)+[q_t\in F],
\end{multline*}
where $0\leq t \leq r$. Therefore $L'(q_0)(z)$ is a rational function $R$ evaluated at variables  $_nf_k^{q_t}(z)$, with $0\leq t\leq r, 1\leq k \leq s(t)$. We denoted this by $L'(q_0)(z)=R(_nf_k^{q_t}(z))$. It is clear that $_nL'(q_0)(z)=\ _nL(q_0)(z)$.  Finally we consider the series $R(f_k^{q_t}(z))$, as  the calculation of a rational expression involves only a finite number of sums, differences, products and reciprocals, and after applying one of these operations the $n$-th term of the power series depends only  on the first $n$ terms of the series involved in the operation. Hence, $R(_nf_k^{q_t}(z))=\ _nR(f_k^{q_t}(z))$ for all $n\geq0$. Therefore the GF of $\mathcal{M}$ is $R(f_k^{q_t}(z))=L(q_0)$.
\end{proof}

\begin{example}
The system of GFs equations  associated with $\mathcal{M}_2$, see Example \ref{ejeconteoo},  is
\begin{align*}
\begin{cases}
L_0&=(2z+z^2)L_1 + 1 \\
L_1&=\dfrac{1- \sqrt{1-4z}}{2}L_2\\
L_2&=2zL_0.
\end{cases}
\end{align*}
Solving the system for  $L_0$, we find the GF for the language  $\mathcal{M}_2$ and therefore of  $\mathcal{M}_1$ and  $\mathcal{M}_3$.
\begin{align*}
L_0=\frac{1}{1-(2z^2+z^3)(1-\sqrt{1-4z})}= 1 + 4z^3 + 6z^4 + 10z^5 + 40z^6 + 114z^7+ \cdots
\end{align*}
\end{example}

\section{Counting Automata  Methodology (CAM)}

A counting automaton associated with an automaton $\mathcal{M}$ can be used to model combinatorial objects if there is a bijection between all words recognized by the automaton $\mathcal{M}$  and the combinatorial objects. Such method, along with the previous theorems and propositions constitute the  \textbf{Counting Automata  Methodology (CAM)}.

We distinguish three phases in the CAM:
\begin{enumerate}
\item Given a problem of enumerative combinatorics, we have to find a convergent automaton (see Theorems \ref{teorema1conv} and \ref{teorema2conv}) whose GF is the solution of the problem.

\item Find a general formula for  the GF of $\mathcal{M}'$  ($f_i={L_0}_i$), see Propositions \ref{sisecuacion} and \ref{sisecuacion2}. The GF of the automaton $\mathcal{M}'$ is obtained from $\mathcal{M}$ by removing a set of states or edges. Sometimes we find a relation of iterative type, such as a continued fraction.  Moreover, it is practical to use equivalent automata, see Definition \ref{autoequiv}.

\item Find the GF $f(z)$ to which the GFs associated with each $\mathcal{M}'$ converge, which is  guaranteed by the convergence theorems. \end{enumerate}

\subsection{Examples of the Counting Automata Methodology}

\begin{example}\label{ejmotzkin}
A Motzkin path of length $n$ is a lattice path of  $\mathbb{Z \times Z}$ running from $(0, 0)$ to $(n, 0)$ that never passes below the $x$-axis and whose permitted steps are the up diagonal step $U=(1, 1)$, the down diagonal step $D=(1,-1)$ and the horizontal step $H=(1, 0)$, called rise, fall and level step, respectively.  The number of Motzkin paths of length $n$ is the $n$-th Motzkin number $m_{n}$, sequence  A001006.  Many other examples of  bijections between Motzkin numbers and others combinatorial objects  can be found in \cite{FRA}.

The number of words of length $n$  recognized by the convergent automaton  $\mathcal{M}_{\mathrm{Mot}}$, see Figure \ref{figmotzkin2},  is  the $n$-th Motzkin number and its GF is
\begin{align}
M\left(z\right)=\sum_{i=0}^{\infty}m_{i}z^{i}&=\frac{1-z-\sqrt{1-2z-3z^2}}{2z^2}\\
&=\cfrac{1}{1-z-\cfrac{z^2}{1-z-\cfrac{z^2}{1-z-\cfrac{z^2}{\ddots}}}}.
\end{align}
\begin{figure}[h]
  \centering
    \unitlength=2.8pt
    \begin{picture}(52, 10)(0,-1)
        \gasset{Nw=4,Nh=4,Nmr=2,curvedepth=0}
    \thinlines
    \node[Nmarks=ir,iangle=180, curvedepth=3](A0)(0,1){\tiny{$q_{0}$}}
    \node(A1)(15,1){\tiny{$q_{1}$}}
    \node(A2)(30,1){\tiny{$q_{2}$}}
    \node(A3)(45,1){\tiny{$q_{3}$}}
    \drawedge[curvedepth=2.5](A0,A1){$z$}
    \drawedge[curvedepth=2.5](A1,A0){$z$}
    \drawedge[curvedepth=2.5](A1,A2){$z$}
    \drawedge[curvedepth=2.5](A2,A1){$z$}
    \drawedge[curvedepth=2.5](A2,A3){$z$}
    \drawedge[curvedepth=2.5](A3,A2){$z$}
    \drawloop[loopdiam=4,loopangle=90](A0){$z$}
    \drawloop[loopdiam=4,loopangle=90](A1){$z$}
    \drawloop[loopdiam=4,loopangle=90](A2){$z$}
    \drawloop[loopdiam=4,loopangle=90](A3){$z$}
    \gasset{Nframe=n,Nadjust=w,Nh=6,Nmr=0}
    \node(P)(52,1){$\cdots$}
   \node(P)(-12,8){$\mathcal{M}_{\mathrm{Mot}}:$}
    \end{picture}
  \caption{Convergent Automaton associated with Motzkin Paths.} \label{figmotzkin2}
\end{figure}
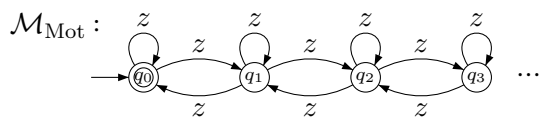

In this case the edge from state $q_{i}$ to state  $q_{i+1}$ represents a rise, the edge from the state $q_{i+1}$ to $q_{i}$ represents a fall and the loops represent the level steps, see   Table \ref{tab:mot}.
\begin{table}[h]
  \centering
  \begin{tabular}{|c|c|c|}\hline
  $(q_i, z,q_{i+1})\in E \Leftrightarrow$
\psset{unit=5mm}
\begin{pspicture}(0,0)(2,2)
\psgrid[gridwidth=0.3pt,gridcolor=gray,subgriddiv=0, gridlabels=0]
\psline[linewidth=1pt]{-}(0,0)(2,2)
\end{pspicture}
&
$(q_{i+1}, z, q_{i})\in E \Leftrightarrow$
\psset{unit=5mm}
\begin{pspicture}(0,0)(2,2)
\psgrid[gridwidth=0.3pt,gridcolor=gray,subgriddiv=0, gridlabels=0]
\psline[linewidth=1pt]{-}(0,2)(2,0)
\end{pspicture}
&
$(q_{i}, z,q_{i})\in E \Leftrightarrow$
\psset{unit=5mm}
\begin{pspicture}(0,0)(2,2)
\psgrid[gridwidth=0.3pt,gridcolor=gray,subgriddiv=0, gridlabels=0]
\psline[linewidth=1pt]{-}(0,1)(2,1)
\end{pspicture} \\ \hline
       \end{tabular}
  \caption{Bijection between $\mathcal{M}_{\mathrm{Mot}}$ and Motzkin paths.}
  \label{tab:mot}
\end{table}

Moreover, it is clear that a word is recognized by  $\mathcal{M}_{\mathrm{Mot}}$  if and only if the number of steps to the right and to the left coincide, which ensures that the path is well formed. Then $m_n=\left|\left\{w\in L(\mathcal{M}_{\mathrm{Mot}}): \left|w\right|=n \right\}\right|=L^{(n)}(\mathcal{M}_{\mathrm{Mot}})$.

Let $\mathcal{M}_{\mathrm{Mot}s}$, $s\geq 1$ be the automaton  obtained from $\mathcal{M}_{\mathrm{Mot}}$, by deleting the states $q_{s+1}, q_{s+2}, \dots$.  Therefore the system of GFs equations of $\mathcal{M}_{\mathrm{Mot}s}$ is
\begin{align*}
\left\{\begin{aligned}
L_{0}&= zL_{0}+zL_{1}+1,\\
L_{i}&= zL_{i-1} + zL_{i}+ zL_{i+1}, \ \ 1\leq i \leq s-1, \\
L_{s}&= zL_{s-1}+zL_{s}.\end{aligned}
\right.
\end{align*}
Substituting repeatedly into each equation $L_i$, we have
\begin{center}
\begin{tabular}{c}
$L_0=\cfrac{H}{ 1-\cfrac{F^2}{1-\cfrac{F^2}{\cfrac{\vdots }{1-F^2}}}}$
\end{tabular}
\hspace{-3ex}
\raisebox{-2ex}{$\left.\phantom{\begin{matrix} 1 \\ 1 \\ 1  \\ 1 \\ 1
\end{matrix}} \right\}$}
\  \raisebox{-2ex}{$s$ times}
\end{center}
where $F=\frac{z}{1-z}$ and $H=\frac{1}{1-z}$.  Since $\mathcal{M}_{\mathrm{Mot}}$ is convergent, then  as $s\rightarrow \infty$ we obtain a convergent continued fraction    $M$ of the  GF of $\mathcal{M}_{\mathrm{Mot}}$. Moreover,
\begin{align*}
M=\cfrac{H}{1-F^2\left(\frac{M}{H}\right)}.
\end{align*}
Hence $z^2M^2-(1-z)M+1=0$ and
\begin{align*}
M(z)=\frac{1-z\pm \sqrt{1-2z-3z^2}}{2z^2}.
\end{align*}
Since $\epsilon \in L(\mathcal{M}_{\mathrm{Mot}})$, $M\rightarrow 0$ as $z\rightarrow 0$. Hence we take the negative sign for the radical in $M(z)$.
\end{example}

\begin{example}\label{ejeriordan}
The number of Motzkin paths of length $n$ without level steps on the $x$-axis (Riordan paths) is the $n$-th Riordan number $r_n$, sequence A005043. The number of words of length $n$  recognized by the convergent automaton  $\mathcal{M}_R$, see Figure \ref{autfinito}(left), is  the $n$-th Riordan number and its  GF is \begin{align}
R(z)&=\sum_{i=0}^{\infty}r_iz^{i}=\frac{1+z-\sqrt{1-2z-3z^2}}{2z(1+z)}\\
&=\cfrac{1}{1-\cfrac{z^2}{1-z-\cfrac{z^2}{1-z-\cfrac{z^2}{\ddots}}}}.
\end{align}

In the automaton  $\mathcal{M}_R$, the initial loop  was removed to avoid the level steps on the $x$-axis, then it is clear that $r_n=\left|\left\{w\in L(\mathcal{M}_{\mathrm{R}}): \left|w\right|=n \right\}\right|=L^{(n)}(\mathcal{M}_{\mathrm{R}})$. Moreover, we can use equivalent automata because the automaton  $\mathcal{M}_{\mathrm{Mot}}$ is a subautomaton of $\mathcal{M}_R$. Hence, there is an automaton  $\mathcal{M}'_R $ with only two visible states, such that $\mathcal{M}'_R \cong \mathcal{M}_R$, see Figure  \ref{autfinito}(right).

\begin{figure}[h]
  \centering
    \unitlength=2.8pt
    \begin{picture}(75, 9)(0,-1)
      \gasset{Nw=4,Nh=4,Nmr=2,curvedepth=0}
    \thinlines
    \node[Nmarks=ir,iangle=180, curvedepth=3](A0)(0,1){\tiny{$q_{0}$}}
    \node(A1)(15,1){\tiny{$q_{1}$}}
    \node(A2)(30,1){\tiny{$q_{2}$}}
    \node(A3)(45,1){\tiny{$q_{3}$}}
    \drawedge[curvedepth=2.5](A0,A1){$z$}
    \drawedge[curvedepth=2.5](A1,A0){$z$}
    \drawedge[curvedepth=2.5](A1,A2){$z$}
    \drawedge[curvedepth=2.5](A2,A1){$z$}
    \drawedge[curvedepth=2.5](A2,A3){$z$}
    \drawedge[curvedepth=2.5](A3,A2){$z$}
    \drawloop[loopdiam=4,loopangle=90](A1){$z$}
    \drawloop[loopdiam=4,loopangle=90](A2){$z$}
    \drawloop[loopdiam=4,loopangle=90](A3){$z$}
    \node[Nmarks=ir,iangle=180, curvedepth=3](A00)(65,1){\tiny{$q_{0}$}}
    \node(A11)(80,1){\tiny{$q_{1}$}}
    \drawedge[curvedepth=2.5](A00,A11){$z$}
    \drawedge[curvedepth=2.5](A11,A00){$zM(z)$}
    \gasset{Nframe=n,Nadjust=w,Nh=6,Nmr=0}
    \node(P)(51,1){$\cdots$}
    \node(P)(56,1){$\cong$}
      \node(P)(56,1){$\cong$}
    \        \node(P)(-6,8){$\mathcal{M}_R$:}
            \node(P)(60,8){$\mathcal{M}'_R$:}
        \end{picture}
  \caption{Equivalent Automata $\mathcal{M}'_R \cong \mathcal{M}_R$, Example \ref{ejeriordan}.} \label{autfinito}
\end{figure}
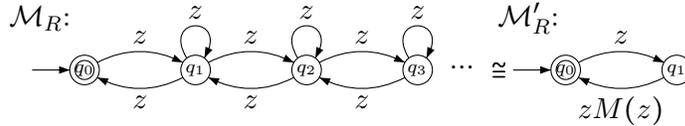

Then we have the following system of GFs equations
\begin{align*}
\left\{\begin{aligned}
R(z)&= 1+ zL_{1}\\
L_{1}&= zM(z)R(z),\end{aligned}
\right.
\end{align*}
where $M(z)$ is the GF for Motzkin numbers. Whence
\begin{align}\label{riordanec}
R(z)=1+z^2M(z)R(z),
\end{align}
then
\begin{align*}
R(z)=\frac{1}{1-z^2M(z)}=\frac{2}{1+z+\sqrt{1-2z-3z^2}}=\frac{1+z-\sqrt{1-2z-3z^2}}{2z(1+z)}
\end{align*}
Moreover, from Equation (\ref{riordanec})
\begin{align*}
r_n= \sum_{j=0}^{n-2}m_jr_{n-j-2}, \ \ n\geq 2.
\end{align*}
We also have  $R(z)=\frac{1+zM(z)}{1+z}$ then $(1+z)R(z)=1+zM(z)$. Hence $r_{n+1}+r_{n}=m_{n}, \ n\geq 0$, this equation is derived in  \cite{FRA} using a combinatorial argument.\end{example}

\section{Linear Infinite  Counting Automaton}
In this section we study a family of counting automata and their GFs.
\begin{definition}
A linear graph $G$ is a  4-tuple  $G=(V, A, n, F)$ where
\begin{itemize}
\item $V\subseteq \mathbb{Z}$ is a  nonempty set of the labelled vertices of $G$. If $m\in V$, the vertex is labelled  by $m$.
\item $A=(A_{-},  A_{\curvearrowright}, A_{\curvearrowleft})$  is the set of edges of $G$, where
\begin{itemize}
\item $A_{-}=\{i\in V: \text{there exists a loop at vertex $i$} \}$.
\item $A_{\curvearrowright}=\{i\in V: \text{there exists an edge from vertex $i$ to vertex $i+1$ } \}$.
\item $A_{\curvearrowleft}=\{i\in V: \text{there exists an edge from vertex $i+1$ to vertex $i$} \}$.
\end{itemize}
\item $n\in V$ is the initial vertex.
\item $F\subseteq V$ is the set of final vertices.
\end{itemize}
\end{definition}
In particular, if $V=\mathbb{N}$, $A=(\mathbb{N}, \mathbb{N}, \mathbb{N}), n=0, F=\{0\}$, we say that $G$ is a \emph{complete linear graph} and is denoted by  $G_{L}$. If $V=\mathbb{Z}$, $A=(\mathbb{Z}, \mathbb{Z}, \mathbb{Z}), n=0, F=\{0\}$, we say that $G$ is a \emph{complete bilinear graph} and is denoted by $G_{BT}$.

\begin{example}\label{glineje2}
Consider the  linear graph  $V=\mathbb{N}$, $A=(2\mathbb{N}, \mathbb{N}, \mathbb{N}), n=0, F=\{0\}$. It is displayed in Figure \ref{glinealeje2}.
\begin{figure}[h]
 \begin{center}
    \unitlength=3pt
    \begin{picture}(52, 6)(0,2)
        \gasset{Nw=4,Nh=4,Nmr=2,curvedepth=0}
    \thinlines
    \node[Nmarks=ir,iangle=180, curvedepth=3](A0)(0,1){\tiny{$0$}}
    \node(A1)(15,1){\tiny{$1$}}
    \node(A2)(30,1){\tiny{$2$}}
    \node(A3)(45,1){\tiny{$3$}}
    \drawedge[curvedepth=3](A0,A1){}
    \drawedge[curvedepth=3](A1,A0){}
    \drawedge[curvedepth=3](A1,A2){}
    \drawedge[curvedepth=3](A2,A1){}
    \drawedge[curvedepth=3](A2,A3){}
    \drawedge[curvedepth=3](A3,A2){}
    \gasset{Nframe=n,Nadjust=w,Nh=6,Nmr=0}
    \node(P)(52,1){$\cdots$}
       \drawloop[loopangle=90](A0){}
    \drawloop[loopangle=90](A2){}
    \end{picture}
  \end{center}
  \caption{Linear graph, Example \ref{glineje2}.}
  \label{glinealeje2}
\end{figure}
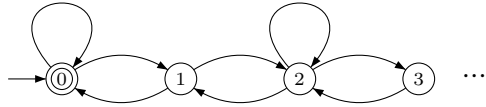
\end{example}

\begin{definition}
A linear counting automaton associated with the linear graph $G$ is a weighted automaton  $\mathcal{M}_c$ determined by the pair  $\mathcal{M}_c=(G, E)$, where $E$ is the set of weighted transitions defined by  the triple  $E=(E_{-},  E_{\curvearrowright}, E_{\curvearrowleft})$ where

\begin{itemize}
\item $E_{-}=\{h_i(z):  i \in A_{-}\}$.
\item $E_{\curvearrowright}=\{f_i(z): i\in A_{\curvearrowright} \}$.
\item $E_{\curvearrowleft}=\{g_i(z): i\in A_{\curvearrowleft} \}$.
\end{itemize}
and for all integer $i$,  $f_i(z), g_i(z)$ and $h_i(z)$ are transitions in parallel.
\end{definition}
The set of all counting automata is denoted by $\mathcal{M}^*_c$.

\begin{example} \label{ejeconteo}
Consider  the linear counting automaton $\mathcal{M}_c=(G, E=(\emptyset, \{z\}, \{z\}))$, where $G=(\mathbb{N}, (\emptyset, \mathbb{N}, \mathbb{N}), 0, \{0\})$. It is displayed in Figure \ref{conteolineal11}.
 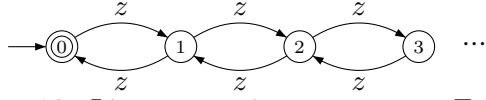
\begin{figure}[h]
 \begin{center}
    \unitlength=3pt
    \begin{picture}(52, 5)(0,1)
        \gasset{Nw=4,Nh=4,Nmr=2,curvedepth=0}
    \thinlines
    \node[Nmarks=ir,iangle=180, curvedepth=3](A0)(0,1){\tiny{$0$}}
    \node(A1)(15,1){\tiny{$1$}}
    \node(A2)(30,1){\tiny{$2$}}
    \node(A3)(45,1){\tiny{$3$}}
    \drawedge[curvedepth=3](A0,A1){$z$}
    \drawedge[curvedepth=3](A1,A0){$z$}
    \drawedge[curvedepth=3](A1,A2){$z$}
    \drawedge[curvedepth=3](A2,A1){$z$}
    \drawedge[curvedepth=3](A2,A3){$z$}
    \drawedge[curvedepth=3](A3,A2){$z$}
    \gasset{Nframe=n,Nadjust=w,Nh=6,Nmr=0}
    \node(P)(52,1){$\cdots$}
    \end{picture}
  \end{center}
  \caption{Linear counting automaton, Example \ref{ejeconteo}.}
  \label{conteolineal11}
\end{figure}
\end{example}

The linear infinite counting automaton associated with the complete linear graph  $G_{L}$ is denoted by $\mathcal{M}_{Lin}$,  see Figure \ref{conteolineal}(left). Similarly,  the linear infinite counting automaton associated with the complete bilinear graph  $G_{BL}$ is denoted by $\mathcal{M}_{BLin}$,  see Figure \ref{conteolineal}(right).
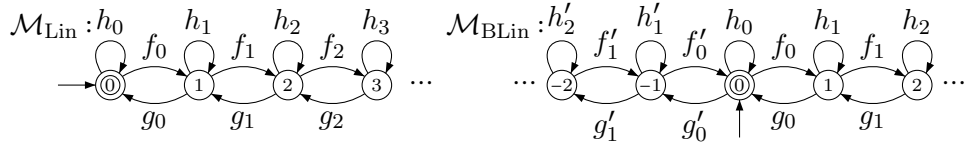
\begin{figure}[h]
  \centering
    \unitlength=2.8pt
    \begin{picture}(115, 15)(-7,-3)
        \gasset{Nw=4,Nh=4,Nmr=2,curvedepth=0}
    \thinlines
    \node[Nmarks=ir,iangle=180, curvedepth=3](A0)(0,0){\tiny{0}}
    \node(A1)(12,0){\tiny{1}}
    \node(A2)(24,0){\tiny{2}}
    \node(A3)(36,0){\tiny{3}}
    \drawedge[curvedepth=2.5](A0,A1){$f_0$}
    \drawedge[curvedepth=2.5](A1,A0){$g_0$}
    \drawedge[curvedepth=2.5](A1,A2){$f_1$}
    \drawedge[curvedepth=2.5](A2,A1){$g_1$}
    \drawedge[curvedepth=2.5](A2,A3){$f_2$}
    \drawedge[curvedepth=2.5](A3,A2){$g_2$}
    \drawloop[loopdiam=4,loopangle=90](A0){$h_0$}
    \drawloop[loopdiam=4,loopangle=90](A1){$h_1$}
    \drawloop[loopdiam=4,loopangle=90](A2){$h_2$}
    \drawloop[loopdiam=4,loopangle=90](A3){$h_3$}
     \node[Nmarks=ir,iangle=-90, curvedepth=3](A0)(85,0){\tiny{$0$}}
    \node(A1)(97,0){\tiny{$1$}}
    \node(A2)(109,0){\tiny{$2$}}
    \node(A11)(73,0){\tiny{$-1$}}
     \node(A22)(61,0){\tiny{$-2$}}
    \drawedge[curvedepth=2.5](A0,A1){$f_0$}
    \drawedge[curvedepth=2.5](A1,A0){$g_0$}
    \drawedge[curvedepth=2.5](A1,A2){$f_1$}
    \drawedge[curvedepth=2.5](A2,A1){$g_1$}
    \drawedge[curvedepth=2.5](A0,A11){$g'_0$}
    \drawedge[curvedepth=2.5](A11,A22){$g'_1$}
        \drawedge[curvedepth=2.5](A22,A11){$f'_1$}
    \drawedge[curvedepth=2.5](A11,A0){$f'_0$}
    \gasset{Nframe=n,Nadjust=w,Nh=6,Nmr=0}
    \node(P)(56,0){$\cdots$}
       \node(P)(114,0){$\cdots$}
    \node(P)(52,8){$\mathcal{M}_{\mathrm{BLin}}:$}
       \drawloop[loopdiam=4,loopangle=90](A0){$h_0$}
    \drawloop[loopdiam=4,loopangle=90](A1){$h_1$}
    \drawloop[loopdiam=4,loopangle=90](A2){$h_2$}
    \drawloop[loopdiam=4,loopangle=90](A11){$h'_1$}
       \drawloop[loopdiam=4,loopangle=90](A22){$h'_2$}
    \gasset{Nframe=n,Nadjust=w,Nh=6,Nmr=0}
    \node(P)(42,0){$\cdots$}
    \node(P)(-8,8){$\mathcal{M}_{\mathrm{Lin}}:$}
    \end{picture}
  \caption{Infinite counting automata $\mathcal{M}_{Lin}$ and $\mathcal{M}_{BLin}$.}
  \label{conteolineal}
\end{figure}

\subsection{Generating Function of  $\mathcal{M}_{Lin}$ and $\mathcal{M}_{BLin}$ }
\begin{theorem}\label{teoflajolet}
The generating function  of $\mathcal{M}_{\mathrm{Lin}}$, see Figure \ref{conteolineal} (left), is
\begin{align*}
E(z)&=\cfrac{1}{1-h_0\left(z\right)-\cfrac{f_0\left(z\right)g_0\left(z\right)}{1-h_1\left(z\right)- \cfrac{f_1\left(z\right)g_1\left(z\right)}{\ddots}}},
\end{align*}
where $f_i(z), g_i(z)$ and $h_i(z)$ are transitions in parallel for all integer $i\geqslant 0$.
\end{theorem}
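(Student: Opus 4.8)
The plan is to follow exactly the pattern used for $\mathcal{M}_{\mathrm{Mot}}$ in Example \ref{ejmotzkin}: write the system of generating function equations for a finite truncation of $\mathcal{M}_{\mathrm{Lin}}$, solve that system by downward elimination to obtain a finite continued fraction, and then let the truncation level tend to infinity, the passage to the limit being legitimate because $\mathcal{M}_{\mathrm{Lin}}$ is convergent. First I would record that $\mathcal{M}_{\mathrm{Lin}}$ is convergent: the transitions in parallel leaving a visible state $i$ are the loop $h_i(z)$, the forward edge $f_i(z)$ and the backward edge $g_{i-1}(z)$ (only $h_0$ and $f_0$ when $i=0$), so their sum is a finite sum of power series with coefficients in $\mathbb{N}$, hence a convergent series, and Theorem \ref{teorema2conv} applies. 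By Propositions \ref{sisecuacion} and \ref{sisecuacion2} (used on each finite truncation below, since those statements are for finite automata) the system of generating function equations of $\mathcal{M}_{\mathrm{Lin}}$ is
\begin{align*}
L_0&=h_0(z)L_0+f_0(z)L_1+1,\\
L_i&=g_{i-1}(z)L_{i-1}+h_i(z)L_i+f_i(z)L_{i+1}\qquad(i\geqslant 1).
\end{align*}

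Next I would introduce $\mathcal{M}_{\mathrm{Lin},s}$, the finite counting automaton obtained from $\mathcal{M}_{\mathrm{Lin}}$ by deleting the states $s+1,s+2,\dots$, exactly as $\mathcal{M}_{\mathrm{Mot}s}$ was obtained from $\mathcal{M}_{\mathrm{Mot}}$; its last equation is $L_s=h_s(z)L_s+g_{s-1}(z)L_{s-1}$, with no $f_s$-term. I would then solve this finite system from the bottom: set $D_s:=1-h_s(z)$, so that $L_s=\dfrac{g_{s-1}(z)}{D_s}\,L_{s-1}$; and inductively, if $L_{i+1}=\dfrac{g_i(z)}{D_{i+1}}\,L_i$ with $D_{i+1}$ depending only on indices $\geqslant i+1$, then substituting into the equation for $L_i$ ($1\leqslant i\leqslant s-1$) gives
\begin{align*}
L_i\Bigl(1-h_i(z)-\frac{f_i(z)\,g_i(z)}{D_{i+1}}\Bigr)=g_{i-1}(z)\,L_{i-1},
\end{align*}
so, writing $D_i:=1-h_i(z)-\dfrac{f_i(z)\,g_i(z)}{D_{i+1}}$, one obtains $L_i=\dfrac{g_{i-1}(z)}{D_i}\,L_{i-1}$; the same substitution in the $L_0$-equation gives $L_0\bigl(1-h_0(z)-\tfrac{f_0(z)g_0(z)}{D_1}\bigr)=1$, that is $L_0=1/D_0$. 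Unwinding $D_0$ shows that the generating function $E_s(z)$ of $\mathcal{M}_{\mathrm{Lin},s}$ is precisely the continued fraction of the statement truncated at level $s$, the innermost partial denominator being $1-h_s(z)$.

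Finally I would pass to the limit. Because every transition in parallel has zero constant term, in the counting automaton one must use at least $s$ transitions, each of degree $\geqslant 1$, to travel from state $0$ to state $s$; hence no accepted word of length $n<s$ ever visits a state $\geqslant s$, and therefore $L^{(n)}(\mathcal{M}_{\mathrm{Lin}})=L^{(n)}(\mathcal{M}_{\mathrm{Lin},s})$ for every $s>n$. Consequently the power series $E_s(z)$ stabilise coefficientwise, their limit is $E(z)$ (a bona fide power series because $\mathcal{M}_{\mathrm{Lin}}$ is convergent), and this same stabilisation is exactly the assertion that the convergents $E_s(z)$ of the infinite continued fraction converge in $\mathbb{Q}[[z]]$ to the expression displayed in the statement; hence $E(z)$ equals that expression. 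The step I expect to be the main obstacle is this last one: cleanly identifying the two coefficientwise stabilisations — one coming from automaton truncation, one being convergence of continued-fraction convergents — and taking care to invoke Propositions \ref{sisecuacion} and \ref{sisecuacion2} only for the finite automata $\mathcal{M}_{\mathrm{Lin},s}$ rather than for $\mathcal{M}_{\mathrm{Lin}}$ directly.
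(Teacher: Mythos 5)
Your proposal is correct and follows essentially the same route as the paper: truncate $\mathcal{M}_{\mathrm{Lin}}$ to a finite linear automaton, solve its system of GF equations by bottom-up substitution to get the finite continued fraction, and use convergence of $\mathcal{M}_{\mathrm{Lin}}$ to pass to the limit $s\to\infty$. Your version merely spells out details the paper leaves implicit (the explicit $D_i$ recursion, the convergence check via Theorem \ref{teorema2conv}, and the coefficientwise stabilisation argument since parallel transitions have no constant term), which are refinements, not a different method.
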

\begin{proof}
We denoted by $\mathcal{M}_{Lin-s}$ the automaton obtained from $\mathcal{M}_{Lin}$ deleting the vertices $s, s+1, s+2, \dots$.  The system of GFs equations of $\mathcal{M}_{Lin-s}, s\geq1$, is
\begin{align*}
\left\{\begin{aligned}
L_{0}&= h_0L_{0}+ f_0L_{1} + 1\\
L_{i}&= g_{i-1}L_{i-1} + h_iL_{i} + f_iL_{i+1}, \ \  1\leq i \leq s-1 \\
L_{s}&= g_{s-1}L_{s-1} + h_sL_{s}.\end{aligned}
\right.
\end{align*}
Substituting repeatedly into each equation $L_i$, we have
\begin{align*}
L_0=\cfrac{1}{ 1-h_0-\cfrac{f_0g_0}{1-h_1-\cfrac{f_1g_1}{\cfrac{\vdots }{1-h_{s-1} - \cfrac{f_{s-1}g_{s-1}}{1-h_s}}}}}.
\end{align*}
Since $\mathcal{M}_{Lin}$ is convergent, then when $s\rightarrow \infty$ we obtain the convergent continued fraction $E(z)$ of the GF of  $\mathcal{M}_{Lin}$.
\end{proof}
The last theorem coincides with Theorem 1 in \cite{FLA} and Theorem 9.1 in \cite{RUT}. However, this presentation extends their applications, taking into account that $f_i(z), g_i(z)$ and $h_i(z)$ are GFs, which can be GFs of  several variables.

\begin{corollary}\label{coro1}
If for all integer $i\geq0$, $f_i(z)=f(z), g_i(z)=g(z)$ and $h_i(z)=h(z)$  in $\mathcal{M}_{\mathrm{Lin}}$, then the GF is
\begin{align}
B(z)&=\frac{1-h(z)-\sqrt{(1-h(z))^2-4f(z)g(z)}}{2f(z)g(z)} \label{ecmot6} \\
&=\sum_{n=0}^{\infty}\sum_{m=0}^{\infty}C_n \binom{m+2n}{m}\left(f\left(z\right)g\left(z\right)\right)^n\left(h(z)\right)^m \label{ecmot7}\\
&=\cfrac{1}{1-h\left(z\right)-\cfrac{f\left(z\right)g\left(z\right)}{1-h\left(z\right)- \cfrac{f\left(z\right)g\left(z\right)}{1-h\left(z\right) -\cfrac{f\left(z\right)g\left(z\right)}{\ddots}}}},
\end{align}
where $f(z), g(z)$ and $h(z)$ are transitions in parallel and $C_n$ is the $n$-th Catalan number, sequence A000108.
\end{corollary}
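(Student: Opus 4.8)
The plan is to derive all three expressions from Theorem~\ref{teoflajolet} together with the Catalan and negative binomial power series. Specializing Theorem~\ref{teoflajolet} to $f_i(z)=f(z)$, $g_i(z)=g(z)$, $h_i(z)=h(z)$ for every $i\geqslant 0$ gives at once the last displayed formula, the periodic continued fraction, and the convergence of $\mathcal{M}_{\mathrm{Lin}}$ (hence of that continued fraction). Writing $B_s$ for the $s$-level truncation appearing in the proof of that theorem, one has $B_1=\frac{1}{1-h(z)}$ and $B_s=\frac{1}{1-h(z)-f(z)g(z)B_{s-1}}$ for $s\geqslant 2$; letting $s\to\infty$ and using $B_s\to B(z)$ yields the functional equation
\begin{align*}
B(z)=\cfrac{1}{1-h(z)-f(z)g(z)B(z)}.
\end{align*}
Throughout I use that $f(z),g(z),h(z)$ are transitions in parallel, hence have zero constant term, so that $1-h(z)$ is a unit in $\mathbb{Z}[[z]]$ and all divisions below are legitimate.

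Clearing denominators gives $f(z)g(z)B(z)^2-(1-h(z))B(z)+1=0$, so
\begin{align*}
B(z)=\frac{(1-h(z))\pm\sqrt{(1-h(z))^2-4f(z)g(z)}}{2f(z)g(z)}.
\end{align*}
The radicand has constant term $1$, hence a well-defined square root in $\mathbb{Z}[[z]]$. To choose the sign, note that vertex $0$ is both the initial and a final vertex, so $\epsilon\in L(\mathcal{M}_{\mathrm{Lin}})$ and $B(0)=1$ (equivalently, set $z=0$ in the functional equation); the $+$ branch is not a formal power series at all, while the $-$ branch gives $B(0)=1$. This proves \eqref{ecmot6}.

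For \eqref{ecmot7} I would change variables to expose the Catalan numbers. Put $t(z)=\dfrac{f(z)g(z)}{(1-h(z))^2}$, which has order at least $2$. Then $(1-h)^2-4fg=(1-h)^2(1-4t)$, and comparing constant terms gives $\sqrt{(1-h)^2-4fg}=(1-h)\sqrt{1-4t}$, so \eqref{ecmot6} rewrites as
\begin{align*}
B(z)=\frac{1}{1-h(z)}\cdot\frac{1-\sqrt{1-4t(z)}}{2t(z)}=\frac{1}{1-h(z)}\sum_{n=0}^{\infty}C_n\,t(z)^n=\sum_{n=0}^{\infty}C_n\,\frac{(f(z)g(z))^n}{(1-h(z))^{2n+1}},
\end{align*}
by the Catalan generating function $\sum_{n\geqslant 0}C_nt^n=\frac{1-\sqrt{1-4t}}{2t}$. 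Expanding each factor by the negative binomial series $\frac{1}{(1-h)^{2n+1}}=\sum_{m\geqslant 0}\binom{m+2n}{m}h^m$ and rearranging the double sum yields \eqref{ecmot7}; the rearrangement is valid in $\mathbb{Z}[[z]]$ because $(f(z)g(z))^n$ has order at least $2n$, so for each $N$ only finitely many pairs $(n,m)$ affect the coefficient of $z^N$.

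I expect the only delicate points to be the choice of branch for the radical and the routine verification that the formal manipulations (invertibility of $1-h$, the square-root identity, the double summation) are legitimate --- all of which follow from the order/constant-term conditions on $f,g,h$. As a remark, \eqref{ecmot7} also admits a combinatorial reading: by the bijection behind $\mathcal{M}_{\mathrm{Mot}}$ in Example~\ref{ejmotzkin}, $B(z)$ is the weight enumerator of Motzkin paths in which each rise, fall and level step contributes $f(z)$, $g(z)$ and $h(z)$; grouping by the number $n$ of rises (equal to the number of falls) and the number $m$ of level steps, and using that there are $\binom{m+2n}{m}C_n$ Motzkin paths of length $m+2n$ with $n$ rises, gives \eqref{ecmot7} directly. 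I would present the algebraic route as the proof, the combinatorial one as a remark.
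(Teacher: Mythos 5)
Your proposal is correct and follows essentially the same route as the paper: specialize Theorem \ref{teoflajolet}, use convergence to get the functional equation $B=\frac{1}{1-h-fgB}$, solve the quadratic for \eqref{ecmot6}, and then rewrite via $u=\frac{fg}{(1-h)^2}$ with the Catalan and negative binomial series to obtain \eqref{ecmot7}. Your extra care about the branch choice and the formal-power-series legitimacy of the manipulations only makes explicit what the paper leaves implicit.
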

\begin{proof}
From Theorem \ref{teoflajolet} the GF is
\begin{align*}
B(z)=\cfrac{1}{1-h\left(z\right)-\cfrac{f\left(z\right)g\left(z\right)}{1-h\left(z\right)- \cfrac{f\left(z\right)g\left(z\right)}{1-h\left(z\right) -\cfrac{f\left(z\right)g\left(z\right)}{\ddots}}}}.
\end{align*}
Since the counting automaton is convergent, then
 \begin{align*}
B(z)=\frac{1}{1-h(z)-f(z)g(z)B(z)}.
\end{align*}
Hence
\begin{align*}
B(z)=\frac{1-h(z)-\sqrt{(1-h(z))^2-4f(z)g(z)}}{2f(z)g(z)}.
\end{align*}
Equation (\ref{ecmot7})  is obtained from observing that
\begin{align*}
B(z)&=\frac{1-h(z)-\sqrt{(1-h(z))^2-4f(z)g(z)}}{2f(z)g(z)}\\
&=\frac{1}{1-h(z)}\frac{1-\sqrt{1-4\frac{f(z)g(z)}{(1-h(z))^2}}}{2\frac{f(z)g(z)}{(1-h(z))^2}}\\
&=\frac{1}{1-h(z)}C(u),
\end{align*}
where $u=\frac{f(z)g(z)}{(1-h(z))^2}$ and $C(z)=\frac{1-\sqrt{1-4z}}{2z}$ is the GF for Catalan numbers. Therefore
\begin{align*}
B(z)=\frac{1}{1-h(z)}C(u)&=\frac{1}{1-h(z)}\sum_{n=0}^{\infty}C_nu^n\\
&=\sum_{n=0}^{\infty}C_n\frac{\left(f(z)g(z)\right)^n}{(1-h(z))^{2n+1}}\\
&=\sum_{n=0}^{\infty}\sum_{m=0}^{\infty}C_n\binom{m+2n}{m}\left(f(z)g(z)\right)^n\left(h(z)\right)^m.
\end{align*}
\end{proof}

\begin{example}  If $h(z)=f(z)=g(z)=z$ in Corollary  \ref{coro1},  we obtain the GF for Motzkin paths  $M(z)$, see Example  \ref{ejmotzkin}. Moreover,
  \begin{align*}
  m_s=\sum_{n=0}^{\left\lfloor\frac{s}{2}\right\rfloor}C_n\binom{s}{2n}.
  \end{align*}
Indeed, using the Equation (\ref{ecmot7})  it follows that
\begin{align*}
M(z)=\sum_{n=0}^{\infty}\sum_{m=0}^{\infty}C_n\binom{m+2n}{m}z^{2n+m},
\end{align*}
taking $t=2n+m$
\begin{align*}
M(z)=\sum_{n=0}^{\infty}\sum_{t=2n}^{\infty}C_n\binom{t}{t-2n}z^{t},
\end{align*}
hence
  \begin{align*}
  m_s=\sum_{n=0}^{\left\lfloor\frac{s}{2}\right\rfloor}C_n\binom{s}{s-2n}=\sum_{n=0}^{\left\lfloor\frac{s}{2}\right\rfloor}C_n\binom{s}{2n}.
  \end{align*}
  \end{example}

\subsection{Applications of CAM to  $k$-colored Motzkin paths and generalized Motzkin paths}

\begin{example}
A $k$-colored Motzkin path of length $n$ is a Motzkin path such that the level steps are labelled by $k$ colors. The number of $k$-colored Motzin paths of length $n$  is the $n$-th $k$-colored Motzkin number, $m_{n, k}$.

If $f(z)=z=g(z)$  and $h(z)=kz$ in Corollary \ref{coro1},  we obtain the GF for $k$-colored Motzkin paths
\begin{align*}
M_k(z)=\sum_{i=0}^{\infty}m_{i, k}z^{i}=\frac{1-kz-\sqrt{(1-kz)^2-4z^2}}{2z^2},
\end{align*}
and
\begin{align}\label{ecmotzkincol}
m_{n,k}=\sum_{n=0}^{\left\lfloor\frac{s}{2}\right\rfloor}C_n\binom{s}{2n}k^{s-2n}.
\end{align}
Equation (\ref{ecmotzkincol}) coincides with Equation (8) of  \cite{SAP}.
In particular, if $k=2$
\begin{align*}
M_2(z)&=\sum_{i=0}^{\infty}m_{i, 2}z^{i}=\frac{1-2z-\sqrt{1-4z}}{2z}\\
&=z+2z^2+5z^3+14z^4+42z^5+132z^6+429z^7+1430z^8+\cdots
\end{align*}
then $zM_2(z)=C(z)-1$ so the $n$-th 2-colored Motzkin number is equal to  $C_{n+1}$.  In \cite{Deumot} and \cite{WEN} there are some bijections between the numbers   $m_{n,2}$ and other combinatorial objects.  If $k=3$
\begin{align*}
M_3(z)&=\sum_{i=0}^{\infty}m_{i, 3}z^{i}=\frac{1-3z-\sqrt{1-6z+5z^2}}{2z}\\
&=z+3z^2+10z^3+36z^4+137z^5+543z^6+2219z^7+9285z^8+\cdots,
\end{align*}
then the sequence A002212 is obtained, some properties about these numbers are established in  \cite{DEU, SHAC}.  If $k=4$ the sequence A005572 which starts 1, 4, 17, 76, 354, 1704, 8421, 42508  is obtained.
\end{example}

\begin{example}\label{generalmot}
A generalized Motzkin path of length $n$ is a Motzkin path  such that the level step is  $H=(k, 1)$ where  $k$ is a fixed positive integer. The number of generalized Motzkin paths of length $n$ is the $n$-th generalized Motzkin number $m_{n}^{k}$.
If $f(z)=z=g(z)$ and $h(z)=z^k$ in Corollary \ref{coro1},  we obtain the GF  for generalized  Motzkin path.
  \begin{align*}
M^{(k)}(z)=\sum_{i=0}^{\infty}m_{i}^{k}z^{i}=\frac{1-z^k-\sqrt{(1-z^k)^2-4z^2}}{2z^2}.
\end{align*}
The last equation coincides with the Equation (1) of  \cite{ECO1}. If $k=2$, we obtain the GF for Schröder paths, sequence A006318.
 \begin{align*}
M^{(2)}(z)&=\sum_{i=0}^{\infty}m_{i}^{2}z^{i}=\frac{1-z^2-\sqrt{1 - 6z^2 + z^4}}{2z^2}\\
&=1 + 2z^2 + 6z^4 + 22z^6 + 90z^8 + 394z^{10} + 1806z^{12} + \cdots
\end{align*}
There is a relation well known between the numbers $m_{i}^2$ and Narayana numbers $N(n,k)=\frac{1}{n}\binom{n}{k}\binom{n}{k-1}$ with $1\leqslant k \leqslant n$  which enumerate a large variety of combinatorial objects, see sequence A001263. In particular,  there is the following identity, \cite{WILL}
\begin{align*}
m_n^2=\sum_{k=0}^{n}N(n,k)2^k.
\end{align*}

 \end{example}

\begin{example}\label{varmot}
If $f(z)=z=g(z)$ and $h(z)=\frac{kz}{1-z}$ in Corollary \ref{coro1}, we obtain the GF $F_k(z)$ for the lattice path which never goes below the $x$-axis, from $(0,0)$ to $(n,0)$ consisting of up steps $U = (1,1)$, down steps $D = (1,-1)$ and horizontal steps $H(k) = (k,0)$ for every positive integer $k$ and can be labelled by $k$ colors
\begin{align*}
F_k(z)=\frac{1-(1+k)z-\sqrt{1-(2+2k)z+(-3+2k+k^2)z^2+8z^3-4z^4}}{2z^2(1-z)},
\end{align*}
and
\begin{align*}
f^{(k)}_s=\sum_{n=0}^{s}\sum_{m=0}^{s-2n}C_n\binom{m+2n}{m}\binom{s-2n-1}{s-m-2n}k^m,
\end{align*}
where $f^{(k)}_s=\left[z^s\right]F_k(z)$. From Corollary  \ref{coro1} we have the first equation, besides
\begin{align*}
F_k(z)&=\sum_{n=0}^{\infty}\sum_{m=0}^{\infty}C_n\binom{m+2n}{m}z^{2n}\left(\frac{kz}{1-z}\right)^m\\
&=\sum_{n=0}^{\infty}\sum_{m=0}^{\infty}\sum_{i=0}^{\infty}C_n\binom{m+2n}{m}\binom{i+m-1}{i}k^mz^{2n+m+i},
\end{align*}
taking $t=2n+m+i$
\begin{align*}
\sum_{n=0}^{\infty}\sum_{m=0}^{\infty}\sum_{t=2n+m}^{\infty}C_n\binom{m+2n}{m}\binom{t-2n-1}{t-m-2n}k^mz^{t},
\end{align*}
hence
\begin{align*}
f^{(k)}_s=\sum_{n=0}^{s}\sum_{m=0}^{s-2n}C_n\binom{m+2n}{m}\binom{s-2n-1}{s-m-2n}k^m.
\end{align*}
If $k=1$ the sequence A135052 is obtained.
\end{example}
The last example shows the variety of results that can be obtained with the CAM, simply by changing  the different transitions in parallel, i.e., changing  the GFs $f_i(z), g_i(z) $ and $h_i(z)$.

\begin{definition}\label{fgoei} For all integer $i\geq 0$ we define the continued fraction
 $E_i(z)$ by:
\begin{align*}
E_i(z)&=\cfrac{1}{1-h_i\left(z\right)-\cfrac{f_{i}\left(z\right)g_{i}\left(z\right)}{1-h_{i+1}\left(z\right)- \cfrac{f_{i+1}\left(z\right)g_{i+1}\left(z\right)}{1-h_{i+2}\left(z\right) -\cfrac{f_{i+2}\left(z\right)g_{i+2}\left(z\right)}{\ddots}}}},
\end{align*}
where $f_i(z), g_i(z), h_i(z)$ are transitions in parallel for all  integers positive $i$.
\end{definition}

\begin{theorem}\label{teoflajoletbi}
The  generating function  of $\mathcal{M}_{\mathrm{BLin}}$, see Figure \ref{conteolineal}(right), is
\begin{align*}\label{fgobilineal}
E_b(z)&=\cfrac{1}{1-h_0(z)-f_0(z)g_0(z)E_1(z)-f'_0(z)g'_0(z)E'_1(z)},
\end{align*}
where $f_i(z), f'_i(z), g_i(z), g'_i(z), h_i(z)$ and $h'_i(z)$ are transitions in parallel for all $i\in\mathbb{Z}$.
\end{theorem}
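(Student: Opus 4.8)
The plan is to run the argument of Theorem \ref{teoflajolet} on each side of the central vertex $0$ and to splice the two contributions there. First note that $\mathcal{M}_{BLin}$ is convergent by Theorem \ref{teorema2conv}, since each vertex carries only finitely many transitions in parallel (a loop and the edges to the two neighbours), so the associated series $F^q(z)$ is a finite sum of power series. For integers $s,t\ge 1$ let $\mathcal{M}_{BLin-(s,t)}$ be the finite counting automaton obtained from $\mathcal{M}_{BLin}$ by deleting the vertices $s,s+1,\dots$ together with the vertices $-t,-t-1,\dots$. A word of length $n<\min(s,t)$ cannot visit a deleted vertex, so $L^{(n)}(\mathcal{M}_{BLin})=L^{(n)}(\mathcal{M}_{BLin-(s,t)})$ for all such $n$, and hence the GF $L_0^{(s,t)}(z)$ of $\mathcal{M}_{BLin-(s,t)}$ converges coefficientwise to $E_b(z)$ as $s,t\to\infty$.

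Next I would solve the finite system of GF equations of $\mathcal{M}_{BLin-(s,t)}$ for $L_0$. Denoting by $L_i$ the GF of the language accepted when $i$ is taken as initial state, the equations at the positive vertices are $L_s=g_{s-1}L_{s-1}+h_sL_s$ and $L_i=g_{i-1}L_{i-1}+h_iL_i+f_iL_{i+1}$ for $1\le i\le s-1$; back-substituting from $L_s$ inward, exactly as in the proof of Theorem \ref{teoflajolet} but shifted by one, eliminates $L_2,\dots,L_s$ and gives
\begin{align*}
L_1=E_1^{(s)}(z)\,g_0(z)\,L_0(z),\qquad
E_1^{(s)}(z)=\cfrac{1}{1-h_1-\cfrac{f_1g_1}{\cfrac{\vdots }{1-h_{s-1}-\cfrac{f_{s-1}g_{s-1}}{1-h_s}}}},
\end{align*}
the $s$-th approximant of the continued fraction $E_1(z)$ of Definition \ref{fgoei}. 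The negative vertices have the mirror-image structure, with the edge toward $0$ carrying weight $f'_{i-1}$ and the edge away from $0$ carrying weight $g'_i$; the same back-substitution yields $L_{-1}=(E'_1)^{(t)}(z)\,f'_0(z)\,L_0(z)$, where $(E'_1)^{(t)}(z)$ is the $t$-th approximant of the continued fraction $E'_1(z)$ built from the primed data $f'_i,g'_i,h'_i$ in exactly the way $E_1(z)$ is built from $f_i,g_i,h_i$.

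Finally, the equation at the central vertex is $L_0=h_0L_0+f_0L_1+g'_0L_{-1}+1$, the $+1$ coming from $0\in F$. Substituting the two expressions above,
\begin{align*}
L_0=h_0L_0+f_0g_0E_1^{(s)}L_0+f'_0g'_0(E'_1)^{(t)}L_0+1,
\end{align*}
so $L_0^{(s,t)}=\bigl(1-h_0-f_0g_0E_1^{(s)}-f'_0g'_0(E'_1)^{(t)}\bigr)^{-1}$. Letting $s,t\to\infty$: the left-hand side tends to $E_b(z)$ by the first paragraph; $E_1^{(s)}\to E_1$ and $(E'_1)^{(t)}\to E'_1$ as formal power series, since every transition in parallel has order $\ge 1$, so the continued fractions of Definition \ref{fgoei} converge; and, because the $n$-th coefficient of a rational expression depends only on finitely many coefficients of its arguments, exactly as in the proof of Proposition \ref{sisecuacion2}, the right-hand side tends to $\bigl(1-h_0-f_0g_0E_1-f'_0g'_0E'_1\bigr)^{-1}$, which is the asserted formula for $E_b(z)$.

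The argument uses no idea beyond Theorem \ref{teoflajolet}, so I expect the only delicate point to be organizational: carrying the two independent truncation parameters $s$ and $t$ through the computation simultaneously and checking that the three coefficientwise limits (the left approximant, the right approximant, and the truncated automaton's GF) are compatible — together with fixing once and for all the meaning of $E'_i(z)$ as the continued fraction obtained from $f'_i,g'_i,h'_i$ in the same way as $E_i(z)$.
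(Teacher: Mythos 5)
Your proof is correct and follows essentially the same route as the paper: both reduce the computation to the single equation at the central vertex, $L_0=h_0L_0+f_0g_0E_1L_0+f'_0g'_0E'_1L_0+1$, obtained by collapsing each half-line into its return contribution $g_0E_1$ (resp.\ $f'_0E'_1$). The only difference is presentational: the paper simply asserts that $\mathcal{M}_{\mathrm{BLin}}$ is equivalent to a three-state automaton with these series-weighted edges and solves the resulting $3\times 3$ system, whereas you justify that equivalence explicitly via the double truncation and the coefficientwise limit argument taken from Theorem \ref{teoflajolet} and Proposition \ref{sisecuacion2}.
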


\begin{proof}
It is clear that the automaton $\mathcal{M}_{\mathrm{BLin}}$ is equivalent to the automaton in  Figure \ref{conteobilinealiso}.
 \begin{figure}[h]
 \begin{center}
    \unitlength=3pt
    \begin{picture}(40, 12)(-20,-1)
        \gasset{Nw=4,Nh=4,Nmr=2,curvedepth=0}
    \thinlines
    \node[Nmarks=ir,iangle=90, curvedepth=3](A0)(0,1){\tiny{$0$}}
    \node(A1)(20,1){\tiny{$1$}}
    \node(A11)(-20,1){\tiny{$-1$}}
    \drawedge[curvedepth=3](A0,A1){$f_0$}
    \drawedge[curvedepth=3](A1,A0){$g_0E_1$}
    \drawedge[curvedepth=3](A0,A11){$g'_0$}
    \drawedge[curvedepth=3](A11,A0){$f'_0E'_1$}
    \gasset{Nframe=n,Nadjust=w,Nh=6,Nmr=0}
       \drawloop[loopangle=90](A0){$h_0$}
    \end{picture}
  \end{center}
  \caption{Equivalent automaton to $\mathcal{M}_{BLin}$.}
  \label{conteobilinealiso}
\end{figure}
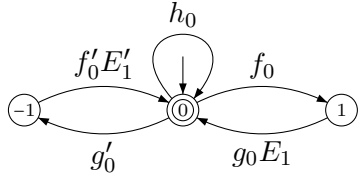

Therefore, we have the following system of GFs equations
\begin{align*}
\left\{\begin{aligned}
L_{0}&= h_0L_{0}+ f_0L_{1} + g'_0L'_1 + 1 \\
L_{1}&= g_{0}E_1L_{0} \\
L'_{1}&= f_{0}E'_{1}L_{0},\end{aligned}
\right.
\end{align*}
where $L'_1=L_{-1}$. Solving the system for $L_0$ we obtain the GF of $\mathcal{M}_{\mathrm{BLin}}$.
\end{proof}

\begin{corollary}\label{corobi}
If for all integer  $i$, $f_i(z)=f(z)=f'_i(z), g_i(z)=g(z)=g'_i(z)$ and $h_i(z)=h(z)=h'_i(z)$  in $\mathcal{M}_{\mathrm{BLin}}$, then we have the GF
\begin{align}
B_b(z)&=\frac{1}{\sqrt{(1-h(z))^2-4f(z)g(z)}}\\
&=\cfrac{1}{1-h(z) - \cfrac{2f(z)g(z)}{1-h(z)-\cfrac{f(z)g(z)}{1-h(z)-\cfrac{f(z)g(z)}{\ddots}}}},
\end{align}
where $f(z), g(z)$ and $h(z)$ are transitions in parallel.
\end{corollary}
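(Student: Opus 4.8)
The plan is to reduce everything to the one-sided case already handled in Corollary \ref{coro1}. First I would invoke Theorem \ref{teoflajoletbi}, which gives
\begin{align*}
E_b(z)=\cfrac{1}{1-h_0(z)-f_0(z)g_0(z)E_1(z)-f'_0(z)g'_0(z)E'_1(z)}.
\end{align*}
Under the hypothesis that all the transitions in parallel coincide (so $f_i=f'_i=f$, $g_i=g'_i=g$, $h_i=h'_i=h$ for every index), the two continued fractions $E_1(z)$ and $E'_1(z)$ of Definition \ref{fgoei} become one and the same object, and moreover this common value is exactly the continued fraction $B(z)$ of Corollary \ref{coro1} (shifting the index in Definition \ref{fgoei} changes nothing once the coefficients are constant). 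Hence $B_b(z)=E_b(z)=\dfrac{1}{1-h(z)-2f(z)g(z)B(z)}$.

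Next I would substitute the closed form $B(z)=\dfrac{1-h(z)-\sqrt{(1-h(z))^2-4f(z)g(z)}}{2f(z)g(z)}$ from Corollary \ref{coro1} into the denominator. Then $2f(z)g(z)B(z)=1-h(z)-\sqrt{(1-h(z))^2-4f(z)g(z)}$, so the denominator simplifies to
\begin{align*}
1-h(z)-2f(z)g(z)B(z)=\sqrt{(1-h(z))^2-4f(z)g(z)},
\end{align*}
which yields the claimed closed form $B_b(z)=\dfrac{1}{\sqrt{(1-h(z))^2-4f(z)g(z)}}$. (One should note in passing that the branch of the square root is the one with constant term $1$, consistent with the convergence of $\mathcal{M}_{\mathrm{BLin}}$ guaranteed by the convergence theorems and with $B(z)$ having no constant-term obstruction.)

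For the continued fraction expression I would again use $B_b(z)=\dfrac{1}{1-h(z)-2f(z)g(z)B(z)}$, but now replace $B(z)$ by its continued fraction form from Corollary \ref{coro1}; this produces the asymmetric continued fraction whose first numerator is $2f(z)g(z)$ and all of whose deeper numerators are $f(z)g(z)$, exactly as stated. The only point that needs a little care is the identification $E_1(z)=E'_1(z)=B(z)$ and the justification that substituting the explicit algebraic value of $B(z)$ into $E_b$ is legitimate termwise — this is exactly the kind of rational-operation-on-power-series argument used in the proof of Proposition \ref{sisecuacion2}, and convergence of $\mathcal{M}_{\mathrm{BLin}}$ is what makes it valid; I do not expect any genuine obstacle beyond this bookkeeping.
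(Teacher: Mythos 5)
Your proposal is correct and follows essentially the same route as the paper: both invoke Theorem \ref{teoflajoletbi}, identify $E_1(z)=E'_1(z)=B(z)$ from Corollary \ref{coro1} to obtain $B_b(z)=\frac{1}{1-h(z)-2f(z)g(z)B(z)}$, and then read off the closed form and the continued fraction from that corollary. Your write-up merely spells out the substitution $2f(z)g(z)B(z)=1-h(z)-\sqrt{(1-h(z))^2-4f(z)g(z)}$ and the convergence/branch bookkeeping that the paper leaves implicit.
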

\begin{proof}
It is clear from Equation (\ref{fgobilineal})
\begin{align*}
B_b(z)&=\frac{1}{1-h(z)-2f(z)g(z)B(z)},
\end{align*}
where $B(z)$ is the GF in Corollary \ref{coro1}. The continued fraction is obtained from the same corollary.
\end{proof}
\begin{corollary}\label{corobi2}
If  $f(z)=g(z)$  in Corollary \ref{corobi}, then we have the GF
\begin{align}
B_b(z)&=\frac{1}{1-h(z)} + \sum_{n=1}^{\infty}\sum_{k=0}^{\infty}\sum_{l=0}^{\infty}2^n\frac{n}{n+2k}\binom{n+2k}{k}\binom{l+2n+2k}{l}f(z)^{2n+2k}h(z)^{l},
\end{align}
where $f(z), g(z)$ and $h(z)$ are  transitions in parallel.
\end{corollary}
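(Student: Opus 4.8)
The plan is to reduce the statement to a closed form already available. I start from Corollary \ref{corobi} with the specialization $g(z)=f(z)$, which gives
\begin{align*}
B_b(z)=\frac{1}{\sqrt{(1-h(z))^2-4f(z)^2}}.
\end{align*}
Since $f(z),g(z),h(z)$ are transitions in parallel, each has zero constant term, so $1-h(z)$ is invertible in $\mathbb{Z}[[z]]$ and $v(z):=f(z)/(1-h(z))$ is a well-defined power series with $v(0)=0$. Factoring out $1-h(z)$ rewrites $B_b(z)=\dfrac{1}{1-h(z)}\cdot\dfrac{1}{\sqrt{1-4v(z)^2}}$, so everything reduces to expanding the ``half'' series $(1-4v^2)^{-1/2}$ and then restoring the factor $(1-h)^{-1}$; all rearrangements below are legitimate because the relevant series have no constant term.

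Next I expand $(1-4v^2)^{-1/2}$ through the Catalan generating function $C(x)=\frac{1-\sqrt{1-4x}}{2x}$. Two classical facts are used: $1-2xC(x)=\sqrt{1-4x}$, hence $(1-4x)^{-1/2}=(1-2xC(x))^{-1}=\sum_{n\geq 0}\bigl(2xC(x)\bigr)^n$ (a geometric series, valid since $2xC(x)$ has zero constant term), and the power formula for the Catalan series, $C(x)^n=\sum_{k\geq 0}\frac{n}{n+2k}\binom{n+2k}{k}x^k$, which follows from Lagrange inversion applied to $C=1+xC^2$. Substituting $x=v^2$ and combining these,
\begin{align*}
\frac{1}{\sqrt{1-4v^2}}=1+\sum_{n=1}^{\infty}2^n v^{2n}C(v^2)^n=1+\sum_{n=1}^{\infty}\sum_{k=0}^{\infty}2^n\,\frac{n}{n+2k}\binom{n+2k}{k}\,v^{2n+2k}.
\end{align*}

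Finally I reinsert $\frac{1}{1-h}$ and expand it with the negative-binomial series. The isolated term $1$ contributes $\frac{1}{1-h(z)}$, while for each pair $(n,k)$ with $n\geq 1$,
\begin{align*}
\frac{1}{1-h(z)}\,v(z)^{2n+2k}=\frac{f(z)^{2n+2k}}{(1-h(z))^{2n+2k+1}}=\sum_{l=0}^{\infty}\binom{l+2n+2k}{l}f(z)^{2n+2k}h(z)^{l}.
\end{align*}
Summing this three-index family (absolutely summable in $\mathbb{Z}[[z]]$, again because $f,g,h$ vanish at $0$) yields exactly the claimed expansion of $B_b(z)$. The only ingredient that is not pure bookkeeping is the power-of-Catalan identity $C(x)^n=\sum_{k\geq 0}\frac{n}{n+2k}\binom{n+2k}{k}x^k$; that is the step where I expect the real work to sit, since the rest is geometric- and binomial-series manipulation and careful index matching ($2m=2(n+k)=2n+2k$ lines up $f^{2m}$ with $f^{2n+2k}$ and $\binom{l+2m}{l}$ with $\binom{l+2n+2k}{l}$).
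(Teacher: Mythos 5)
Your proposal is correct and follows essentially the same route as the paper: both expand $B_b$ via the geometric series in $2uC(u)$ with $u=f(z)^2/(1-h(z))^2$, the Catalan power identity $C(u)^n=\sum_{k\ge0}\frac{n}{n+2k}\binom{n+2k}{k}u^k$, and the negative binomial expansion of $(1-h(z))^{-(2n+2k+1)}$. If anything, your bookkeeping of the prefactor $\frac{1}{1-h(z)}$ (giving $f(z)^{2n+2k}/(1-h(z))^{2n+2k+1}$ before expanding in $h$) is slightly more careful than the paper's intermediate lines, which momentarily drop that factor before restoring it in the final step.
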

\begin{proof}
From Theorem  \ref{teoflajoletbi} we have
\begin{align*}
B_b(z)=\frac{1}{1-h(z)-2f^2(z)B(z)},
\end{align*}
where $B(z)$ is the GF in Corollary  \ref{coro1} with $g(z)=f(z)$, i.e.,
\begin{align*}
B(z)&=\frac{1-h(z)-\sqrt{(1-h(z))^2-4f^2(z)}}{2f^2(z)}=\frac{1}{1-h(z)}C(u),
\end{align*}
where $u=\frac{f^2(z)}{(1-h(z))^2}$ and $C(u)$ is the GF for Catalan numbers.  The powers of the GF for Catalan numbers (see Eq. 5.70 of \cite{knuth}), satisfy that
\begin{align*}
C^n(u)=\sum_{k=0}^{\infty}\frac{n}{n+2k}\binom{n+2k}{k}u^k, \ \ n\geqslant1.
\end{align*}
Then
\begin{align*}
B_b(z)&=\frac{1}{1-h(z)-2f^2(z)\frac{C(u)}{1-h(z)}}=\frac{1}{1-h(z)}\frac{1}{1-2uC(u)}\\
&=\frac{1}{1-h(z)}\sum_{n=0}^{\infty}(2uC(u))^n=\frac{1}{1-h(z)} + \sum_{n=1}^{\infty}(2uC(u))^n\\
&=\frac{1}{1-h(z)} + \sum_{n=1}^{\infty}\sum_{k=0}^{\infty}2^n\frac{n}{n+2k}\binom{n+2k}{k}\frac{f^{2n}(z)}{(1-h(z))^{2n}} u^k\\
&=\frac{1}{1-h(z)} + \sum_{n=1}^{\infty}\sum_{k=0}^{\infty}2^n\frac{n}{n+2k}\binom{n+2k}{k}\frac{f^{2n+2k}(z)}{(1-h(z))^{2n+2k}}\\
&=\frac{1}{1-h(z)} + \sum_{n=1}^{\infty}\sum_{k=0}^{\infty}\sum_{l=0}^{\infty}2^n\frac{n}{n+2k}\binom{n+2k}{k}\binom{l+2n+2k}{l}f^{2n+2k}(z)h^{l}(z).
\end{align*}
\end{proof}

\begin{example}\label{ejmotzkinbi}
A  Grand Motzkin path of length  $n$ is a Motzkin path without the condition that never passes below the $x$-axis.  The number of Grand Motzkin paths of length $n$ is the $n$-th Grand Motzkin number $m^b_{n}$, sequence A002426.

The number of words of length $n$ recognized by the convergent automaton $\mathcal{M}_{\mathrm{BLin}}$, see Figure \ref{conteolineal}(right), with $f_i(z)=z=g_i(z)=h_i(z)=f'_i(z)=g'_i(z)=h'_i(z)$ for all integer $i$, is the $n$-th Grand Motzkin number and its GF is
\begin{align}
M^b\left(z\right)&=\sum_{i=0}^{\infty}m^b_{i}z^{i}\frac{1}{\sqrt{1-2z-3z^2}}\\
&=\cfrac{1}{1-z-\cfrac{2z^2}{1-z-\cfrac{z^2}{1-z-\cfrac{z^2}{\ddots}}}}\\
&=\frac{1}{1-z} + \sum_{n=1}^{\infty}\sum_{k=0}^{\infty}\sum_{l=0}^{\infty}2^n\frac{n}{n+2k}\binom{n+2k}{k}\binom{l+2n+2k}{l}z^{2n+2k+l}.  \label{ecbimot}
\end{align}
These equations are easily obtained from Corollary  \ref{corobi} and \ref{corobi2}.

In this case the edge from the state $i$ and $i+1$ and vice versa, represent a rise or a fall  above the $x$-axis and the edge from the state $-i$ and $-(i+1)$ and vice versa, represent  a rise or a fall below the $x$-axis, and the loops represent the level steps.   Moreover, it is clear that a word  is recognized by $\mathcal{M}_{\mathrm{BLin}}$ if and only if it has an equal number of steps to the right and to the left, then
\begin{align*}
m^b_n=\left|\left\{w\in L(\mathcal{M}_{\mathrm{BLin}}): \left|w\right|=n \right\}\right|=L^{(n)}(\mathcal{M}_{\mathrm{BLin}}).
\end{align*}
On the other hand, taking $t=2n+2k+l$  in  Equation (\ref{ecbimot}), we have
 \begin{align*}
\sum_{i=0}^{\infty}m^b_{i}z^{i}&=\frac{1}{1-z} + \sum_{n=1}^{\infty}\sum_{k=0}^{\infty}\sum_{t=2n+2k}^{\infty}2^n\frac{n}{n+2k}\binom{n+2k}{k}\binom{t}{t-2n-2k}z^{t},
\end{align*}
then
\begin{align*}
m^b_s=1+\sum_{n=0}^{s}\sum_{k=0}^{\lfloor \frac{s-2n}{2} \rfloor}2^n\frac{n}{n+2k}\binom{n+2k}{k}\binom{s}{2n+2k}.
\end{align*}

The Grand Motzkin  paths are related to the central trinomial coefficients. Let $T_n$ denote
the $n$-th central trinomial coefficient, defined as the coefficient of $x^n$ in the expression of $(1 + x + x^2)^n$  or it can also  be defined as the coefficient of the form $x^ny^nz^k$ in the expression of $(x+y+z)^n$. For example if $n=2$ then   $(x+y+z)^2=x^2+y^2+z^2+2xy+2xz+2yz$, hence $T_2=3$.  It is clear that the Grand Dyck paths are enumerated by the central binomial coefficients, i.e.,  $T_n=m_n^b$.  For integers $a, b, c$ we call the coefficient of $x^n$ in the expression  $(a+bx+cx^2)^n$ the generalized central trinomial coefficient, $T^*_n$ or it can also be defined as the coefficients of the form $x^ny^nz^k$ in the expression $(a+bx+cx^2)^n$.  Taking $f(z)=az=f'(z), g(z)=cz=g'(z)$ and $h(z)=bz=h'(z)$ in the counting automaton  $\mathcal{M}_{\mathrm{BLin}}$, we obtain the GF for the numbers  $T^*_n$.
 \begin{align*}
 \sum_{i=0}^{\infty} T^*_iz^i&=\frac{1}{\sqrt{(1-bz)^2-4acz^2}}=\frac{1}{\sqrt{1-2bz+(b^2-4ac)z^2}}\\
 &=\cfrac{1}{1-bz-\cfrac{2acz^2}{1-bz-\cfrac{acz^2}{1-bz-\cfrac{acz^2}{\ddots}}}}
 \end{align*}
The last GF coincides with Equation (3) of \cite{NOE}.  By using the Binomial Theorem twice and the identity $\binom{n}{k}\binom{n-k}{n-2k}=\binom{2k}{k}\binom{n}{2k}$, we have
  \begin{align*}
T^*_n&=\sum_{k=0}^{\lfloor n/2\rfloor}\binom{2k}{k}\binom{n}{2k}b^{n-2k}(ac)^k.
 \end{align*}
Since $m_n^b=T_n$, then
 \begin{align*}
  \sum_{k=0}^{\lfloor s/2\rfloor}\binom{2k}{k}\binom{s}{2k}=1+\sum_{n=0}^{s}\sum_{k=0}^{\lfloor \frac{s-2n}{2} \rfloor}2^n\frac{n}{n+2k}\binom{n+2k}{k}\binom{s}{2n+2k}.
 \end{align*}
\end{example}

\end{document}